\documentclass[journal]{IEEEtran}
\usepackage{amsmath,amssymb,amsfonts}
\usepackage{amsthm}
\usepackage{algorithmic}
\usepackage{array}
\usepackage{textcomp}
\usepackage{stfloats}
\usepackage{url}
\usepackage{verbatim}
\usepackage{graphicx}
\usepackage{cite}
\usepackage{xcolor}
\definecolor{fragfedavg}{RGB}{210,245,215}
\definecolor{fragkrum}{RGB}{224,224,255}
\DeclareRobustCommand{\FragFedAvgTag}{\colorbox{fragfedavg}{\strut\texttt{FragFedAvg}}}
\DeclareRobustCommand{\FragKrumTag}{\colorbox{fragkrum}{\strut\texttt{FragKrum}}}
\newtheorem{theorem}{Theorem}
\newtheorem{corollary}{Corollary}
\newtheorem{lemma}{Lemma}
\newtheorem{assumption}{Assumption}
\theoremstyle{remark}
\newtheorem{remark}{Remark}
\theoremstyle{plain}
\usepackage[ruled,vlined]{algorithm2e}
\usepackage{silence}
\usepackage{caption}
\usepackage{subcaption}

\usepackage{balance}
\usepackage{booktabs}
\usepackage{tabularx}
\usepackage{float}
\usepackage{placeins}
\usepackage{microtype}
\begin{document}

% change symbols of affiliations to numbers
\DeclareRobustCommand*{\IEEEauthorrefmark}[1]{%
  \raisebox{0pt}[0pt][0pt]{\textsuperscript{\footnotesize #1}}%
}

\title{\textit{UnlinkableDFL}: A Framework for Network-Layer Unlinkability in Decentralized Federated Learning}

\author{Chao Feng\IEEEauthorrefmark{1},
Thomas Grübl\IEEEauthorrefmark{1},
Jan von der Assen\IEEEauthorrefmark{1},
Sandrin Raphael Hunkeler\IEEEauthorrefmark{1},
Linn Anna Spitz\IEEEauthorrefmark{1},\\
G\'er\^ome Bovet\IEEEauthorrefmark{2}, and
Burkhard Stiller\IEEEauthorrefmark{1} \\
\IEEEauthorrefmark{1}Communication Systems Group, Department of Informatics, University of Zurich, 8050 Zürich, Switzerland\\
\{cfeng, gruebl, vonderassen, stiller\}@ifi.uzh.ch,
\{sandrinraphael.hunkeler, linnanna.spitz\}@uzh.ch\\
\IEEEauthorrefmark{2}Cyber-Defence Campus, armasuisse Science \& Technology, 3602 Thun, Switzerland\\
gerome.bovet@armasuisse.ch}

% The paper headers
% \markboth{Journal of \LaTeX\ Class Files,~Vol.~14, No.~8, August~2021}%
% {Shell \MakeLowercase{\textit{et al.}}: A Sample Article Using IEEEtran.cls for IEEE Journals}

% \IEEEpubid{0000--0000/00\$00.00~\copyright~2021 IEEE}

% Remember, if you use this you must call \IEEEpubidadjcol in the second
% column for its text to clear the IEEEpubid mark.

\maketitle

\begin{abstract}
Decentralized Federated Learning (DFL) removes the central aggregator of conventional Federated Learning, but peer-to-peer model exchange still exposes network traces: who communicates, when fragments move, and which packets correlate across rounds. This paper studies network-layer sender--message linkability for DFL model sharing and presents \textit{UnlinkableDFL}, a framework in which every participant acts as both a learner and a peer-based mix relay. Shareable model states are split into uniform, onion-encrypted fragment packets and carried over a peer-run mixnet with cover traffic, randomized delays, and independently sampled multi-hop paths. Nodes then perform fragmented aggregation over local and received fragments without sender identities. The analysis bounds sender-linking probability through route uncertainty and relay shuffles, and characterizes when fragment-level aggregation preserves FedAvg-style behavior. A prototype implements QUIC transport, Sphinx-style packets, and Single-Use Reply Block (SURB) acknowledgments. Experiments show that the design sustains learning under sparse deployment while exposing a privacy--cost trade-off: path diversity and relay mixing raise network-layer uncertainty, whereas delay and forwarding dominate overhead. Stress tests confirm robustness to churn and Byzantine updates. A curious-recipient attack marks the boundary of the network-layer guarantee, where payload-level fingerprints survive network-layer anonymization and need complementary defenses, although partial updates and more IID data weaken this attack surface.
\end{abstract}

\begin{IEEEkeywords}
Decentralized Federated Learning, Unlinkability, Anonymous Communication
\end{IEEEkeywords}

% INTRO
\section{Introduction}
Federated Learning (FL) enables multiple data owners to train a shared model without collecting their raw data at a single site~\cite{mcmahan2017communication}. Instead, each participant trains locally and exchanges model updates that are aggregated into improved models. This design reduces direct exposure of training data and helps address privacy and regulatory concerns, but the vanilla Centralized FL (CFL) architecture still depends on a central server to perform model aggregation. The server can become a process bottleneck, a single point of failure, and a powerful observation point for privacy attacks~\cite{Beltran2023}.

Decentralized Federated Learning (DFL) removes this coordinator by letting participants exchange and aggregate updates over a peer-to-peer network~\cite{Beltran2023}. This removes the central aggregation bottleneck and makes the learning process more resilient to server failure. However, this decentralization does not make the exchange anonymous. Although raw data remain local, each training round still leaves observable traces: which peers communicate, when updates are sent, how often nodes appear together, and what statistical features their updates carry. Across rounds, these traces can become linkable. An adversary that correlates them may connect an update or packet to a participant, group fragments that originate from the same node, infer participation patterns, reconstruct private information, or recover the communication topology~\cite{elmrini2024privacy,feng2025topologiestopology}. Thus, DFL shifts privacy risks to the peer-to-peer exchange layer rather than removing them.

This linkability risk spans the data, model, and network layers. Data-layer defenses limit whether a record or peer can be inferred as a training participant~\cite{abadi2016deep}. At the model layer, Shatter uses chunking and virtual identities to reduce content-based attribution, while DivShare supports asynchronous DFL through sliced exchange~\cite{biswas2025noiseless,biswas2025boosting}. These defenses reduce data or payload exposure, not the network trace, which still reveals who communicated, when, and along which paths. Tor- and onion-routing-inspired FL addresses network-layer anonymity~\cite{jadav2023,chen2022,wang2024}, but assumes a central or server-assisted coordinator and does not fit DFL, where peers should provide both learning and anonymous routing. Modern peer-based mixnets can anonymize traffic without reintroducing that coordinator~\cite{piotrowska2017}, but they target independent messages and must be adapted to fragmented, churn-prone, peer-relayed DFL.

This paper, therefore, focuses on network-layer unlinkability for DFL model sharing. The goal is to make model-fragment traffic unlinkable to participant identities at the network layer, in the observable communication trace rather than in decrypted model content, while preserving the decentralized nature of the learning workflow. To this end, we introduce \textit{UnlinkableDFL}, a DFL framework in which each node acts both as a learner and as a mix relay. Shareable model states are split into uniform, onion-encrypted fragment packets and carried over a peer-run mixnet with cover traffic, randomized delays, and independently sampled multi-hop paths. Aggregation is performed locally over a fragment pool containing local and received fragments, so model exchange remains decentralized while the network reduces observable sender--message links. This work makes the following contributions:
\begin{itemize}
    \item Network-layer sender--message unlinkability is formulated as a DFL requirement, with an analysis of how design properties affect network uncertainty.
    \item A peer-based mixnet framework enables each DFL node to perform learning, relaying, mixing, and fragment aggregation without relying on a central entity.
    \item A practical \textit{UnlinkableDFL} prototype combines QUIC-based authenticated transport, fragment-level quantization and compression, padded onion-encrypted packets, randomized delays, cover packets, retransmissions, and fragment-based aggregation.
    \item Experiments quantify learning performance, unlinkability, communication latency, resource cost, and robustness under dynamic peer behavior, including a passive network-layer sender-linking attack that drives linking to the random baseline under the full design.
\end{itemize}

The remainder of the paper covers related work (Section~\ref{sec:related}), the unlinkability problem (Section~\ref{sec:problem}), the framework design (Section~\ref{sec:design}), the prototype (Section~\ref{sec:implementation}), the analysis (Section~\ref{sec:properties}), the evaluation (Section~\ref{sec:evaluation}), and conclusions (Section~\ref{sec:conclusion}).

% RELATED WORK
\section{Related Work}
\label{sec:related}
Linkability captures an adversary's ability to associate observations that were meant to remain separate, such as records, users, updates, fragments, packets, or communication events~\cite{steinbrecher2003}. 

In FL, this risk appears at multiple layers. At the data layer, privacy attacks, such as membership inference attacks (MIAs), test whether a target record or participant contributed to training, often using exposed model parameters or updates. Differentially Private Stochastic Gradient Descent (DP-SGD) mitigates this risk by clipping gradients and adding calibrated noise during training~\cite{abadi2016deep}. This data-layer protection does not prevent an observer from linking transmitted updates, fragments, or packets to the participants that produced them.

At the model layer, linkability concerns updates, chunks, or fragments. An adversary may associate one item with its source or group several as coming from the same participant. Local updates may carry statistical fingerprints of the participant's data distribution, training trajectory, or model state~\cite{elmrini2024privacy}. MixNN mixes neural-network layers from different clients through a proxy before server aggregation to reduce inference leakage from model updates~\cite{lebrun2022mixnn}. Shatter splits model exchanges into chunks and uses virtual identities to reduce content-based attribution in DFL~\cite{biswas2025noiseless}. DivShare slices models for asynchronous DFL, dispersing update content across peers~\cite{biswas2025boosting}. These methods reduce or reshape payload exposure, but the surrounding network trace can still reveal who communicated, when, and along which paths.

At the network layer, linkability concerns communication traces: transmitted updates or fragments, packets, timings, routes, and peer contacts that can identify participants or reveal topology. Mix networks are the classical foundation for breaking these relations, batching and reordering messages across relays to unlink senders from receivers~\cite{chaum1981}. UFL shuffles submitted updates and combines this with Shamir secret sharing~\cite{chen2024}. AIFL provides anonymity for cross-device FL with incentives~\cite{chen2024_02}, and AnoFel supports anonymous update submission for privacy-preserving FL~\cite{almashaqbeh2025}. Onion-routing-inspired systems, including FedOnion, FedTor, and AEFL, hide communication paths or participant identities in FL traffic~\cite{jadav2023,chen2022,wang2024}. These designs reduce network-layer exposure, but remain tied to CFL or server-assisted workflows, where registration, routing, or aggregation is not fully peer-operated. None provides sender--message unlinkability in a setting where peers themselves must supply both learning and anonymous routing.

Messaging mixnets cannot be reused in DFL unchanged. Classical and modern designs, from Chaum's mixnet to Loopix-style stratified mixnets~\cite{piotrowska2017}, assume dedicated relays, stable topologies, and independent fixed-size messages. DFL violates all three assumptions. Every peer is both a learner and a relay, peer sets change under churn, and the payload is a structured model state that must be fragmented and later aggregated by index range rather than delivered as one opaque message. Delivery confirmation adds another obstacle, because a naive acknowledgment can expose the same communication link that the mixnet is meant to hide.

\begin{table}[t]
\centering
\caption{Representative solutions and the linkability problems they address.}
\label{tab:related_comparison}
\footnotesize
\renewcommand{\arraystretch}{1.05}
\setlength{\tabcolsep}{2pt}
\begin{tabular}{@{}>{\raggedright\arraybackslash}p{0.22\columnwidth}
                >{\raggedright\arraybackslash}p{0.45\columnwidth}
                >{\raggedright\arraybackslash}p{0.14\columnwidth}
                >{\raggedright\arraybackslash}p{0.14\columnwidth}@{}}
\hline
\textbf{Solution} & \textbf{Problem addressed} & \textbf{Layer} & \textbf{Setting} \\
\hline
DP-SGD~\cite{abadi2016deep}
& Membership inference
& Data
& CFL/DFL \\
\hline
MixNN~\cite{lebrun2022mixnn}
& Inference leakage from updates
& Model
& CFL \\
% \hline
Shatter~\cite{biswas2025noiseless}
& Fragment-to-source attribution
& Model
& DFL \\
% \hline
DivShare~\cite{biswas2025boosting}
& Fragmented model exchange
& Model
& DFL \\
\hline
AIFL~\cite{chen2024_02}
& Client identity links
& Network
& CFL \\
% \hline
AnoFel~\cite{almashaqbeh2025}
& Sender--message links (updates)
& Network
& CFL \\
% \hline
UFL~\cite{chen2024}
& Sender--message links (updates)
& Network
& CFL \\
% \hline
FedOnion~\cite{jadav2023}
& Path and participant exposure
& Network
& CFL \\
% \hline
FedTor~\cite{chen2022}
& Path and participant exposure
& Network
& CFL \\
% \hline
AEFL~\cite{wang2024}
& Path and participant exposure
& Network
& CFL \\
\hline
\textit{UnlinkableDFL}\newline(\textit{this work})
& Sender--message links (fragments)
& Network
& DFL \\
\hline
\end{tabular}
\end{table}

Table~\ref{tab:related_comparison} summarizes this positioning. Existing defenses address data-layer information leakage, model-layer payload attribution, or network-layer anonymity under CFL assumptions. This paper targets the uncovered case: network-layer sender--message unlinkability for fragmented DFL model sharing without a central shuffler, registry, router, or aggregator.

% Problem
\section{Problem Statement and Threat Model}
\label{sec:problem}
This section defines the network-layer sender--message unlinkability problem and the threat model considered in this work.

\begin{figure*}[!b]
    \centering
    \includegraphics[width=0.8\linewidth]{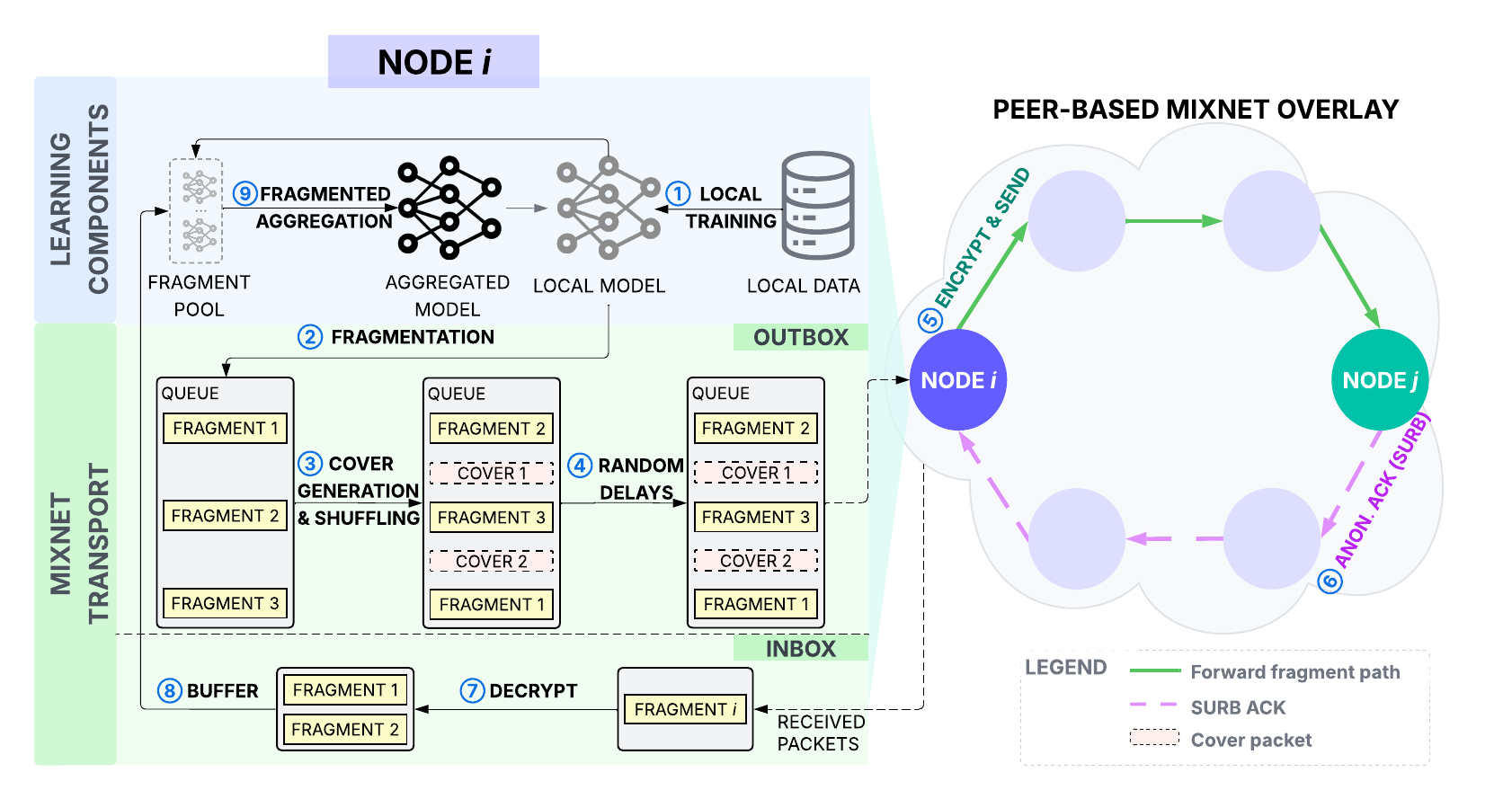}
    \caption{Overview of \textit{UnlinkableDFL}: per-node learning and mixnet transport (left) and the peer-based mixnet overlay (right). Numbered arrows trace one fragment from local training to anonymous aggregation.}
    \label{fig:arch}
\end{figure*}

\subsection{Problem Statement}
\label{sec:problem_statement}
Consider a DFL deployment \(\mathcal{S}\) with topology \(\mathcal{G}=(\mathcal{N},\mathcal{E})\), where \(\mathcal{N}=\{n_1,\ldots,n_N\}\) is the set of nodes. Each node \(n_i\) owns a private dataset \(D_i\), maintains a local model \(w_i^t\) at round \(t\), derives a shareable model state from local training, exchanges model-fragment packets with peers, and aggregates received fragments locally without a global coordinator.

The privacy issue studied here is not whether raw data leave the node, but whether network-layer observations can link a model-sharing message to the node that originated it. In this work, each such message appears on the wire as a model-fragment packet. Let \(\mathsf{Pkt}\) be the set of transmitted model-fragment packets. For each \(m\in\mathsf{Pkt}\), let \(S(m)\in\mathcal{N}\) denote its true sender. Let \(T\) denote the adversary's network-layer observation across rounds, specified in the threat model below.

\paragraph{\textbf{Objective}}
Let \(m_j\) be a target model-fragment packet generated during model sharing, and let \(\mathcal{U}_T\subseteq\mathcal{N}\) be the set of candidate senders that the trace \(T\) leaves consistent with it. After observing \(T\), a passive adversary names the most likely sender, succeeding with probability
\begin{equation}
\label{eq:plink-objective}
    p_{\mathrm{link}}(m_j\mid T)=\max_{u\in\mathcal{U}_T}\Pr[S(m_j)=u\mid T].
\end{equation}
Its advantage over guessing uniformly within \(\mathcal{U}_T\) is
\begin{equation}
\label{eq:adversarial_advantage}
    \mathrm{Adv}^{\mathsf{link}}_{\mathcal{S}}(\mathcal{A})
    = \mathbb{E}_T\!\left[p_{\mathrm{link}}(m_j\mid T)-\frac{1}{|\mathcal{U}_T|}\right],
\end{equation}
where \(\mathcal{S}\) denotes the DFL deployment under analysis. The deployment is \emph{\(\varepsilon\)-unlinkable} if \(\mathrm{Adv}^{\mathsf{link}}_{\mathcal{S}}(\mathcal{A})\le\varepsilon\) for every passive network-layer adversary \(\mathcal{A}\).

\subsection{Threat Model}
\label{sec:threat_model}
\paragraph{\textbf{Adversary Type and Capabilities}}
The adversary \(\mathcal{A}\) is passive and \emph{honest-but-curious}. It may monitor network links and may control a limited subset of compromised nodes \(\mathcal{C}\subset\mathcal{N}\). Compromised nodes follow the deployment rules but share their local observations with \(\mathcal{A}\). In the strongest case, monitored links may cover the whole network.

The adversary observes packet timings, peer contacts, and visible routing metadata across rounds, and each compromised node additionally contributes its own network-layer view---packet timings, transport predecessors, and the next hop it learns by peeling one onion layer. It may know public framework parameters and topology learned through normal participation, but not honest nodes' private datasets, cryptographic keys, or local randomness. It cannot forge, drop, delay, or modify packets, nor break cryptographic primitives. Following the passive global-observer model standard in mixnet analysis, active attacks such as dropping, flooding, and tagging are out of scope. The trace \(T\) is thus network-layer only, and content that a compromised destination obtains by decrypting fragments lies outside \(T\).

\paragraph{\textbf{Adversary Goals and Defensive Objective}}
The adversary aims to link network-layer traffic to participant identities. In particular, it may try to identify the sender of a target packet, decide whether multiple packets originate from the same node, associate fragments with participants, or reconstruct the communication topology \(\mathcal{G}\) from observable traffic patterns. The defensive objective is to keep the resulting sender-linking advantage within the \(\varepsilon\)-unlinkability criterion above while preserving decentralized and adaptive learning under changing peer behavior.

% Framework
\section{UnlinkableDFL Framework}
\label{sec:design}
This section presents \textit{UnlinkableDFL}, a DFL framework for network-layer sender--message unlinkability. It combines fragment generation, payload encoding, mixnet transport, anonymous acknowledgments, and fragmented aggregation.

\subsection{Design Scope and Framework Overview}
\label{subsec:framework_overview}
\textit{UnlinkableDFL} targets decentralized model sharing without runtime central aggregation or routing coordination. Peers act as both learners and relays, selected model fragments are anonymized at the network layer, and learning supports asynchronous execution under topology changes. The learning task and anonymity mechanism are configured separately, so the model, dataset, and aggregation rule can change independently of the mixnet parameters that shape the network trace. Supplementary Material~B shows the configuration interface.

The framework addresses the network-layer linkability defined in Section~\ref{sec:problem}. Figure~\ref{fig:arch} groups each node into learning components and mixnet transport on the left and shows the peer-based mixnet overlay on the right. Steps~1--2 train a local model and convert the resulting shareable model state into indexed fragments. Steps~3--4 add cover packets, shuffle the outbox, and schedule randomized delays before release.

Step~5 onion-encrypts and sends the selected fragment packets over independently sampled multi-hop mixnet paths. Step~6 returns delivery feedback through an attached Single-Use Reply Block (SURB), keeping acknowledgments independent of the receiver's inbox path. Step~7 decrypts incoming packets, and step~8 buffers the recovered fragments before they enter the fragment pool alongside local fragments. Step~9 applies \texttt{FragFedAvg} or \texttt{FragKrum} over available index ranges, replaces the local model with the aggregated result, and repeats this cycle until convergence or a preset round limit.

\subsection{Decentralized Learning Workflow}
\label{subsec:learning_workflow}
A shareable model state is the model representation a node exchanges after local training, such as a parameter update, rather than raw training data. A node trains locally, converts this state into fragments, and submits those fragments to the mixnet transport. The transport does not coordinate learning. It carries uniformly handled packets, and each destination updates its own fragment pool and decides when to aggregate.

A node may produce a shareable model after an epoch, a time interval, or a local trigger, but the output must be representable as indexed fragments that can be routed anonymously and aggregated by index range. Privacy parameters such as \(O\), \(K\), \(\mu\), \(\sigma\), and cover-packet use shape the observable packet trace rather than the learning objective.

\subsection{Model Fragment Generation}
\label{subsec:fragmented_update_generation}
Fragmentation decouples model sharing from sender-level attribution. Instead of exposing one complete model state as a single exchange unit, a node exposes smaller fragments that can be routed independently and mixed with peer fragments. This reduces the content visible in any single packet and avoids a stable one-model-to-one-sender pattern. Figure~\ref{fig:payload-pipeline} summarizes the payload pipeline used after fragmentation. A Sphinx packet~\cite{danezis2009sphinx} here means a fixed-size onion-routed packet with encrypted per-hop routing metadata and a padded payload body.

\begin{figure}[t]
    \centering
    \includegraphics[width=\linewidth]{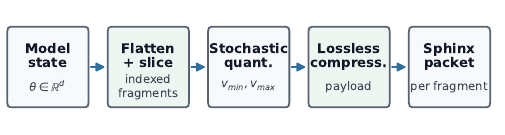}
    \caption{Payload-encoding pipeline.}
    \label{fig:payload-pipeline}
\end{figure}

To generate fragments, a node serializes its shareable model state into a flat parameter vector and partitions the vector into fixed-size slices. Each fragment contains the values of one slice and an index range that identifies its position in the flattened parameter vector. The index is needed for reconstruction and aggregation, but it is not a sender identifier and does not by itself indicate which other ranges are transmitted in the same round.

Before transport, each fragment passes through this payload-encoding pipeline. The default configuration applies unbiased stochastic quantization followed by generic lossless compression, while an unquantized mode bypasses quantization for lossless comparison. Encoding is applied per fragment, which reduces the number of Sphinx packets needed for a model exchange without changing the fragment index semantics.

Fragmentation also supports selective fragment transmission. A node may send only a subset of fragments in a round, for example to reduce bandwidth, react to churn, or support asynchronous training schedules. Receivers insert arriving fragments into the fragment pool, and later aggregation operates over the currently available index ranges rather than waiting for every peer to provide a complete model state.

\subsection{Peer-Based Mixnet Transport}
\label{subsec:communication}
Once a shareable model state has been fragmented and encoded, selected fragments enter the outgoing queue and are drawn into the outbox. The design separates authenticated point-to-point transport from the anonymous mixed overlay: transport links provide reliable neighbor delivery, while unlinkability comes from onion forwarding, randomized delays, cover packets, and independently sampled overlay paths. Unlike a stand-alone messaging mixnet, this transport is embedded in the DFL workflow and must tolerate delayed fragments, selective transmission, and retransmissions.

\paragraph{\textbf{Outbox Mixing and Cover Packets}}
\label{subsec:fragment_outbox}
Each node maintains a local queue for outgoing fragments, relay traffic, cover packets, and SURB-related messages. When the outbox is empty, the node moves up to \(O\) queued items into the outbox. If fewer than \(O\) real items are available, cover packets are added until the outbox reaches size \(O\). The outbox is then randomly permuted using a fresh uniformly sampled permutation. This reshuffling breaks insertion order, local burst boundaries, and the direct timing relationship between local training and outgoing transmissions.

Cover packets prevent idle periods from exposing node activity. They follow the same queueing, shuffling, delay, encryption, and forwarding steps as real fragments. Relays do not distinguish cover packets from model-fragment packets, so packet volume reveals less about whether a node has recently trained, produced, or received fragments.

\paragraph{\textbf{Delay Scheduling and Multi-Hop Routing}}
\label{subsec:onion_routing}
Each outbox item receives an independently sampled delay from a truncated normal distribution with mean \(\mu\) and standard deviation \(\sigma\). Items are transmitted only after their delays expire. Items that remain in the outbox participate in subsequent reshuffling rounds, allowing older and newer packets to mix before release.

Before transmission, the sender samples a multi-hop route from its peer view. The path length is bounded by \(K\), and fragments choose routes independently. Onion encryption makes each relay learn only the next hop. A relay peels one layer, requeues the packet, and releases it after outbox mixing rather than forwarding it immediately. This reduces route-based sender--message cues and prevents same-model fragments from following a stable path pattern.

\paragraph{\textbf{Incoming Packet Processing}}
Incoming packets first enter the inbox. Relays peel one onion layer and requeue the packet for later mixing, while destinations decrypt the packet, discard cover packets, emit SURB acknowledgments, and buffer the recovered model fragments by index range without sender labels.

\subsection{Anonymous Acknowledgments and Retransmission}
\label{subsec:surb}
Acknowledgments support delivery confirmation, resend control, peer liveness, and fragment-completeness monitoring. Direct acknowledgments would create a privacy channel because the receiver would contact the original sender or reuse the forwarding path, exposing communication links or repeated fragment patterns.

\textit{UnlinkableDFL} uses Single-Use Reply Blocks (SURBs), a reply mechanism from mixnet and Sphinx-style anonymous communication. When creating a fragment, the sender attaches an opaque one-time return path that is independent of the forward path. The receiver can use the SURB to acknowledge delivery without learning the sender's address.

An acknowledgment is wrapped with the SURB and forwarded through the embedded return route. Forward and return paths have independently sampled hop counts and relays, so acknowledgments do not reveal stable path correlations. If an expected acknowledgment times out, the fragment is retransmitted along a fresh route with a fresh SURB, supporting delivery without direct sender identification.

\subsection{Fragmented Aggregation}
\label{subsec:fragment_fedavg}
Each node maintains a fragment pool indexed by ranges in the flattened parameter vector. The pool contains local and received fragments. Duplicate entries are removed using fragment identifiers or content hashes, and aggregation proceeds over the fragments available for each range. Similar to Shatter, aggregation is performed at the fragment level rather than over sender-attributed full models~\cite{biswas2025noiseless}. In \textit{UnlinkableDFL}, fragments arrive through the mixnet and are stored without sender identities, so aggregation consumes fragment values and ranges rather than participant labels. The default rule is \texttt{FragFedAvg}, with \texttt{FragKrum} available when Byzantine behavior is in scope~\cite{blanchard2017machine}.

\begin{algorithm}[t]
\small 
\caption{\FragFedAvgTag{} and \FragKrumTag{}}
\label{alg:fragagg}
\DontPrintSemicolon
\SetKwInOut{Input}{Input}\SetKwInOut{Output}{Output}
\Input{Local model \(W\), fragment pool \(\mathcal{P}\), rule \(\rho \in \{\mathsf{FragFedAvg},\mathsf{FragKrum}\}\), and per-range Byzantine bound \(B_r\) for \(\mathsf{FragKrum}\).}
\Output{Updated model \(\tilde W\).}
\BlankLine
Flatten \(W\) into \(x \in \mathbb{R}^n\) and initialize \(\tilde x \leftarrow x\).\;
Deduplicate \(\mathcal{P}\) by fragment identifier or content hash.\;
Let \(\mathcal{R}\) be the index ranges represented in \(\mathcal{P}\).\;
\ForEach{range \(r=[s,e) \in \mathcal{R}\)}{
    \(Y_r \leftarrow \{\mathrm{values}(f): f \in \mathcal{P},\ \mathrm{range}(f)=r\}\).\;
    \uIf{\(\rho=\mathsf{FragFedAvg}\)}{
        \(\tilde x[s:e] \leftarrow |Y_r|^{-1}\sum_{y\in Y_r} y\).\tcp*{\FragFedAvgTag}
    }
    \uElseIf{\(\rho=\mathsf{FragKrum}\)}{
        \(q_r \leftarrow |Y_r|-B_r-2\).\;
        \ForEach{candidate slice \(y \in Y_r\)}{
            Let \(N_y\) be the \(q_r\) nearest slices to \(y\) in \(Y_r\setminus\{y\}\).\;
            \(s_y \leftarrow \sum_{z\in N_y}\|y-z\|_2^2\).\;
        }
        \(y^\star \leftarrow \arg\min_{y\in Y_r} s_y\).\;
        \(\tilde x[s:e] \leftarrow y^\star\).\tcp*{\FragKrumTag}
    }
    \Else{
        \(\tilde x[s:e] \leftarrow x[s:e]\).\tcp*{fallback}
    }
}
Reshape \(\tilde x\) into model format to obtain \(\tilde W\).\;
\Return \(\tilde W\).
\end{algorithm}

Algorithm~\ref{alg:fragagg} gives the shared skeleton for \texttt{FragFedAvg} and \texttt{FragKrum}. \texttt{FragFedAvg} averages all available values for the same range. \texttt{FragKrum} scores each candidate slice by squared distance to its nearest slices and selects the lowest-score slice. Ranges absent from the pool keep their pre-aggregation local values, allowing learning to continue when routes fail, relays leave, or only partial peer fragments arrive before aggregation. Section~\ref{sec:properties} analyzes when \texttt{FragFedAvg} preserves FedAvg behavior.

\subsection{Topology Dynamics and Churn Handling}
\label{subsec:dynamic_topology}
Nodes maintain local peer views instead of a global registry. Join announcements propagate through normal control messages. Failures are detected through connectivity checks and missing acknowledgments, and inactive peers are removed from route sampling. If churn drops in-flight packets, fresh-route retransmission, fragment-pool aggregation, and local fallback for uncovered indices let learning continue without global coordination.
% Prototype Implementation
\section{Prototype Implementation}
\label{sec:implementation}
The prototype instantiates the framework as a single-machine experimental platform for repeatable DFL deployments. Each node runs learning, routing, mixing, aggregation, and local metric collection as an independent process, deployable either as a container or as a pinned OS process.\footnote{Available at: \url{https://github.com/Cyber-Tracer/DFL_PeerBasedMixing}}

\paragraph{\textbf{Communication Stack}}
Nodes communicate over a transport topology of long-lived QUIC connections implemented with \texttt{aioquic}. Each neighbor pair maintains one authenticated connection, and mutual Transport Layer Security (TLS) binds peer identity to per-run certificates rather than container addresses.

The anonymous overlay is implemented above this transport layer with the \texttt{sphinxmix} library\footnote{\url{https://pypi.org/project/sphinxmix/}}, a Python implementation of the Sphinx mix format~\cite{danezis2009sphinx}, which provides path construction, packet encryption and decryption, and SURBs. Random paths and cover packets use Python's \texttt{secrets} module. Sphinx uses the Advanced Encryption Standard (AES) in counter mode to derive per-hop keystreams, while Hash-based Message Authentication Code with SHA-256 (HMAC-SHA256) protects packet headers and payloads. All packets are padded to the same length before transmission.

\paragraph{\textbf{Learning and Aggregation Stack}}
Local training and fragmented aggregation are implemented in \texttt{PyTorch}. Datasets are loaded through \texttt{torchvision}, and model parameters are serialized, flattened, split into fragments, encoded, and reconstructed during aggregation. The evaluated default applies fragment-level 8-bit stochastic quantization followed by generic lossless compression, while a 32-bit mode keeps the payload unquantized for comparison. Local and incoming fragments are stored in a fragment pool, deduplicated, and aggregated through the \texttt{FragFedAvg} or \texttt{FragKrum} interface in Algorithm~\ref{alg:fragagg}.

\paragraph{\textbf{Scenario and Monitoring Stack}}
Scenario deployment and node orchestration use \texttt{FastAPI} for the experiment manager, which launches each node either as a Docker container or as a core-pinned OS process. The manager is outside the decentralized training path: it starts the nodes, injects per-run configuration, exposes monitoring endpoints, and exports logs, but it does not aggregate models, choose mixnet routes, or coordinate fragment exchange during training. Live metrics are streamed to the React dashboard through Server-Sent Events (SSE) and exported to comma-separated value (CSV) logs at the end of a run. The logs cover communication, learning, and resource metrics. Supplementary Material~B provides implementation configuration details and dashboard screenshots.

% Analysis
\section{Theoretical Analysis}
\label{sec:properties}
This section gives theoretical support for \textit{UnlinkableDFL}. The network-layer analysis bounds sender--message linking through route and relay uncertainty, while the learning analysis identifies when fragment-based aggregation preserves FedAvg-style behavior. The arguments draw on mixnet anonymity metrics~\cite{piotrowska2017} and fragment-level learning analysis~\cite{biswas2025noiseless}.

\subsection{Analysis Scope and Metrics}
\label{subsec:analysis_scope}
Let \(m\) denote a target fragment packet generated by sender \(S(m)\). Following the threat model in Section~\ref{sec:threat_model}, the adversary observes the network trace \(T\) defined there. In the sender-identification objective defined in Section~\ref{sec:problem_statement} and Eq.~\ref{eq:plink-objective}, the adversary wins if it identifies \(S(m)\) from the trace. The scope is deliberately network-layer: the trace does not include private local randomness, cryptographic keys, or decrypted fragment content at the destination.

A trace \(T\) usually admits multiple explanations for the same packet. This work calls each explanation a \emph{sender-labeled trajectory}. A trajectory \(\pi\) specifies a candidate sender, a feasible overlay route, and the input-output matching choices through relay outboxes. An input-output matching is the association between a packet entering a relay and one packet later emitted from that relay's shuffled outbox. It is hidden when this association is not visible in \(T\). Let \(\Pi_T(m)\) be the set of such trajectories that remain feasible after conditioning on \(T\). For a candidate sender \(u\), let
\begin{equation}
    \Gamma_T(u,m)=\{\pi\in\Pi_T(m): \pi \text{ starts at } u\}
\end{equation}
be the trajectories that attribute \(m\) to \(u\). The trace-consistent candidate sender set is \(\mathcal{U}_T=\{u:\Gamma_T(u,m)\neq\emptyset\}\). Given this set, define
\begin{equation}
\label{eq:plink-def}
    p_{\mathrm{link}}(m\mid T)
    =\max_{u\in\mathcal{U}_T}\Pr[S(m)=u\mid T]
\end{equation}
as the adversary's best conditional sender-identification probability, the quantity in the objective of Eq.~\ref{eq:plink-objective}. The advantage in Eq.~\ref{eq:adversarial_advantage} is non-negative because, for each trace, the maximum posterior over \(\mathcal{U}_T\) is at least the uniform value \(1/|\mathcal{U}_T|\). Finally,
\begin{equation}
    \gamma_T=\max_{u\in\mathcal{U}_T}|\Gamma_T(u,m)|
\end{equation}
records the largest number of feasible trajectories assigned to any one candidate sender. Moreover, path uncertainty and sender uncertainty are not identical: one sender can be consistent with several routes and relay matchings.

To keep the guarantee and evaluation metrics aligned, we use three entropy quantities. First, let \(\mathcal{H}(\pi)\) be the set of hidden relay shuffles crossed by trajectory \(\pi\), and let \(h(\pi)=|\mathcal{H}(\pi)|\). Using the configured outbox size \(O\), the \emph{relay entropy} is
\begin{equation}
\label{eq:relay-entropy}
H_{\mathrm{relay}}(\pi)
= h(\pi)\log_2 O
=-\log_2 O^{-h(\pi)}.
\end{equation}
This is the min-entropy contribution of local input-output matching under uniform outbox shuffling. Section~\ref{sec:eval-entropy} also reports \(H_{\mathrm{relay}}^{\mathrm{loc}}\), the Shannon entropy of one relay's absorption-or-forwarding distribution. We keep the superscript to distinguish that diagnostic from the min-entropy contribution in Eq.~\ref{eq:relay-entropy}.

Second, let \(R_K\) denote the random route sampled under maximum path length \(K\). After the adversary observes trace \(T\), \(\mathcal{R}_K(T)\) denotes the set of route values of \(R_K\) that remain feasible, and \(h(r)\) is the number of hidden shuffles on route \(r\). The count-based \emph{path entropy} is
\begin{equation}
\label{eq:path-count-entropy}
H_{\mathrm{path}}^{\mathrm{cnt}}(m\mid T)
=\log_2\sum_{r\in\mathcal{R}_K(T)} O^{h(r)} .
\end{equation}
The evaluation reports \(H_{\mathrm{path}}\) as a shorthand special case of \(H_{\mathrm{path}}^{\mathrm{cnt}}\). Under the uniform route-counting setup used in Section~\ref{sec:eval-entropy}, where hidden paths of length \(1\) to \(K\) are counted with common outbox size \(O\), this special case is \(H_{\mathrm{path}}=\log_2\sum_{k=1}^{K}O^k\). This quantity counts feasible shuffled paths under uniform choices.

Third, the privacy guarantee uses posterior trajectory min-entropy. Here, posterior means after conditioning on the observed trace \(T\):
\begin{equation}
\label{eq:trajectory-min-entropy}
H_{\infty}(\Pi_m \mid T)
= -\log_2 \max_{\pi \in \Pi_T(m)}
\Pr[\Pi_m=\pi \mid T].
\end{equation}
The relationship is exact under a uniform posterior over the counted trajectories, in which case \(H_{\infty}(\Pi_m\mid T)=H_{\mathrm{path}}^{\mathrm{cnt}}(m\mid T)\). In general, \(H_{\infty}(\Pi_m\mid T)\le H_{\mathrm{path}}^{\mathrm{cnt}}(m\mid T)\), so the guarantee below is stated in terms of min-entropy, while relay/path entropy serve as uniform-case parameter metrics for interpreting \(K\) and \(O\).

\subsection{Network-Layer Unlinkability Guarantee}
\label{subsec:unlinkability}
In \textit{UnlinkableDFL}, externally visible packets use a uniform Sphinx-style format and padding, and cover packets follow the same outbox handling as real fragment packets. Forward paths are sampled independently across fragments, while SURB acknowledgments use separately encoded return paths. These choices remove packet appearance, type, and acknowledgment direction as direct sender-dependent features in \(T\).

\begin{theorem}[Network-layer sender--message unlinkability]
\label{thm:network_unlinkability}
Consider a fragment packet \(m\) whose sender-labeled trajectories \(\Pi_T(m)\) are consistent with the adversary trace \(T\). Let \(h=\min_{\pi\in\Pi_T(m)}h(\pi)\) be the minimum number of hidden forwarding shuffles on any feasible trajectory. Assume that visible packet features are sender-independent and that each hidden forwarding shuffle uses a uniformly shuffled outbox of size \(O\). If the adversary uses only network-layer observations, its best sender-linking probability is bounded by
\begin{equation}
\label{eq:sender-link-bound}
p_{\mathrm{link}}(m \mid T)
 \le \min\!\left\{1,\gamma_T\,2^{-H_{\infty}(\Pi_m \mid T)}\right\}.
\end{equation}
Moreover, if the hidden shuffle choices remain unrevealed except through their feasible outbox positions,
\begin{equation}
\label{eq:path-min-entropy}
H_{\infty}(\Pi_m \mid T)
\ge H_{\infty}(R_K \mid T)
+ h\log_2 O,
\end{equation}
where \(R_K\) is the random route variable defined above, and \(H_{\infty}(R_K\mid T)\) is the route uncertainty that remains after observing \(T\). Combining Eqs.~\ref{eq:sender-link-bound} and~\ref{eq:path-min-entropy} gives the explicit parameter form
\begin{equation}
\label{eq:explicit-ko-bound}
p_{\mathrm{link}}(m \mid T)
 \le \min\!\left\{1,
 \gamma_T 2^{-H_{\infty}(R_K\mid T)} O^{-h}\right\}.
\end{equation}
Here \(h\le K\). The relay-shuffle contribution is therefore at most \(K\log_2 O\), reached only when the least-hidden feasible trajectory still hides all \(K\) forwarding stages.
\end{theorem}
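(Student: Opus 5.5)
The proof has three parts: a union bound over trajectories for Eq.~\ref{eq:sender-link-bound}, a factorization of the trajectory posterior into a route part and a shuffle part for Eq.~\ref{eq:path-min-entropy}, and substitution to obtain Eq.~\ref{eq:explicit-ko-bound}. Throughout, I treat the trajectory \(\Pi_m\) as a random variable whose support, after conditioning on \(T\), is \(\Pi_T(m)\).

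\textbf{Step 1: the sender-linking bound.} Since a trajectory determines its starting sender, the event \(\{S(m)=u\}\) is the disjoint union of the events \(\{\Pi_m=\pi\}\) over \(\pi\in\Gamma_T(u,m)\). So for every \(u\in\mathcal{U}_T\),
\[
\Pr[S(m)=u\mid T]=\sum_{\pi\in\Gamma_T(u,m)}\Pr[\Pi_m=\pi\mid T]\le |\Gamma_T(u,m)|\max_{\pi\in\Pi_T(m)}\Pr[\Pi_m=\pi\mid T].
\]
The maximum over \(\pi\) equals \(2^{-H_\infty(\Pi_m\mid T)}\) by Eq.~\ref{eq:trajectory-min-entropy}. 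Taking the maximum over \(u\) replaces \(|\Gamma_T(u,m)|\) by \(\gamma_T\). A probability is always at most \(1\), which gives the \(\min\{1,\cdot\}\).

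The hypothesis that visible packet features are sender-independent is used here. It guarantees that \(T\) carries no extra sender information beyond what passes through the trajectory variable, so \(\Pr[S(m)=u\mid T]\) is indeed the marginal of the trajectory posterior. I would state this as: \(S(m)\) is conditionally independent of the remaining trace features given \(\Pi_m\). It holds because of the uniform Sphinx format, padding, cover-traffic handling and separate SURB return paths described above.

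\textbf{Step 2: the route/shuffle factorization.} I decompose a trajectory as \(\pi=(r,\sigma)\), where \(r\) is its route and \(\sigma\) is the tuple of hidden input--output matchings at the \(h(\pi)\) hidden relays. The chain rule gives
\[
\Pr[\Pi_m=\pi\mid T]=\Pr[R_K=r\mid T]\cdot\Pr[\sigma\mid R_K=r,T].
\]
The first factor is at most \(2^{-H_\infty(R_K\mid T)}\).

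For the second factor, each hidden shuffle is a fresh uniform permutation of an outbox of size \(O\), drawn independently of routes and of other relays. The hypothesis that these choices are revealed only through their feasible outbox positions means that conditioning on \(T\) leaves each hidden matching uniform over at least \(O\) outputs. The key step is a symmetry argument: relabeling outputs of a uniform permutation does not change the likelihood of the visible trace, so the posterior of each matching stays uniform over \(O\) positions. Independence across relays then bounds the second factor by \(O^{-h(\pi)}\). Since \(h(\pi)\ge h\) for every feasible \(\pi\), this is at most \(O^{-h}\).

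Maximizing the product over \(\pi\) and taking \(-\log_2\) yields Eq.~\ref{eq:path-min-entropy}.

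\textbf{Step 3: the explicit bound and \(h\le K\).} Substituting Eq.~\ref{eq:path-min-entropy} into Eq.~\ref{eq:sender-link-bound} gives \(2^{-H_\infty(\Pi_m\mid T)}\le 2^{-H_\infty(R_K\mid T)}O^{-h}\), which is Eq.~\ref{eq:explicit-ko-bound}. Every route has at most \(K\) forwarding stages, so each \(h(\pi)\le K\) and hence \(h\le K\). The shuffle term \(h\log_2 O\) is therefore at most \(K\log_2 O\), with equality only when \(h=K\), that is, when the least-hidden trajectory hides all \(K\) stages.

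\textbf{Main obstacle.} The delicate step is the conditional uniformity of the hidden matchings given \(T\) in Step 2. Timing observations could in principle skew which outbox slot a packet occupies. For example, randomized delays might make an earlier-arriving packet likelier to leave first. I would first try to absorb this into the stated hypothesis, reading "unrevealed except through feasible outbox positions" as exactly the statement that the posterior over feasible matchings is uniform. If that reading is too strong, the fallback is to replace \(O\) by an effective outbox size equal to the number of timing-feasible outputs. The same chain-rule argument then goes through unchanged with that smaller size in place of \(O\).
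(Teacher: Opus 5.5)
Your proposal is correct and follows the paper's proof essentially step for step: the same disjoint-union bound over \(\Gamma_T(u,m)\) giving \(\gamma_T 2^{-H_\infty(\Pi_m\mid T)}\), the same route-times-shuffle chain-rule factorization with at most \(1/O\) per hidden shuffle, and the same substitution with \(h\le K\). The paper differs only in two details. It keeps the sender label \(U\) explicit and uses \(\Pr[U=u,R=r\mid T]\le\Pr[R=r\mid T]\), and it bounds each hidden matching conditionally on the earlier ones rather than invoking independence across relays, which is the safer form because prior independence of the permutations need not survive conditioning on \(T\).
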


Operationally, the theorem turns min-entropy into a best-guess bound. If the trace leaves \(b\) bits of trajectory min-entropy, the largest posterior mass of a single trajectory is at most \(2^{-b}\), and sender linking pays only the multiplicity factor \(\gamma_T\). The parameter form exposes the controls used later: \(K\) enlarges route uncertainty, \(O\) adds \(\log_2 O\) bits for each hidden shuffle, and Corollary~\ref{cor:compromised} accounts for partial relay compromise.

The bound is conditional on two idealizations: sender-independent visible features and uniformly shuffled outboxes. The fill-and-delay outbox in Section~\ref{subsec:fragment_outbox} approximates these conditions, so the result should be read as a parameter guide, not as an end-to-end anonymity proof or a prediction of concrete attack success. Section~\ref{sec:eval-unlink} measures that success relative to the random baseline \(1/|\mathcal{U}_T|\), and Supplementary Material~A.1 gives the proof.

\begin{remark}[Timing as a residual channel]
The bound counts route and relay-shuffle uncertainty under the assumption that visible packet features, including timing, are sender-independent. The configured delay jitter is small and primarily decorrelates local training from outgoing transmission rather than obfuscating per-hop timing, so this assumption is an idealization of the timing channel. Residual timing and volume leakage are measured directly by the attack in Section~\ref{sec:eval-unlink} rather than assumed away.
\end{remark}

\begin{corollary}[Compromised-relay regime]
\label{cor:compromised}
Suppose each forwarding stage samples a relay independently and uniformly from a population in which a fraction \(c\) is compromised. A length-\(K\) trajectory then has all forwarding stages compromised with probability \(c^{K}\), so with probability \(1-c^{K}\) it retains at least one hidden shuffle and \(p_{\mathrm{link}}(m\mid T)\le \gamma_T 2^{-H_{\infty}(R_K\mid T)}O^{-1}\). The expected number of hidden shuffles is \(K(1-c)\). This idealized calculation is replaced in topology-constrained deployments by the corresponding probability that all forwarding stages are compromised.
\end{corollary}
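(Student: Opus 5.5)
The statement immediately above is Corollary~\ref{cor:compromised}, so I would derive it from Theorem~\ref{thm:network_unlinkability}, specifically from the explicit parameter form in Eq.~\ref{eq:explicit-ko-bound}. The only new ingredient is a probabilistic model of which forwarding stages are compromised. Throughout I treat the \(K\) forwarding stages of a fixed length-\(K\) trajectory as carrying independent indicators \(X_1,\dots,X_K\), where \(X_i=1\) means stage \(i\) is honest. Since relays are sampled independently and uniformly from a population with compromised fraction \(c\), these are i.i.d.\ Bernoulli\((1-c)\).

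The first step connects these indicators to the hidden-shuffle count \(h(\pi)\). A compromised relay reports its transport predecessor and the next hop it peels, so its input--output matching is visible in \(T\) and contributes nothing to \(\mathcal{H}(\pi)\). An honest relay's uniformly shuffled outbox is not revealed beyond feasible positions, so it contributes one hidden shuffle. Hence \(h(\pi)=\sum_i X_i\), and:
\begin{itemize}
\item \(\Pr[h(\pi)=0]=\prod_i\Pr[X_i=0]=c^K\) by independence;
\item \(\Pr[h(\pi)\ge 1]=1-c^K\);
\item \(\mathbb{E}[h(\pi)]=K(1-c)\) by linearity of expectation.
\end{itemize}

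The second step plugs this into the theorem. On the event \(h\ge 1\), monotonicity of \(O^{-h}\) in \(h\) (for \(O\ge 2\)) gives
\[
O^{-h}\le O^{-1}
\quad\Longrightarrow\quad
p_{\mathrm{link}}(m\mid T)\le \gamma_T\,2^{-H_\infty(R_K\mid T)}\,O^{-1}.
\]
Dropping the \(\min\{1,\cdot\}\) only weakens the statement.

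The main obstacle is that the theorem's \(h\) is a minimum over all feasible trajectories in \(\Pi_T(m)\), not the hidden-shuffle count of the true trajectory. To handle this, I would state the corollary for the sampled trajectory under the corollary's idealization that competing feasible trajectories pass through the same stage-wise compromise pattern. More precisely, the compromise status of each forwarding stage is what the adversary sees, and feasible alternatives differ only in hidden honest choices. Then \(\min_\pi h(\pi)\) equals the count for the realized pattern.

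I would also check that conditioning on \(T\) does not bias the Bernoulli model. Compromise is fixed before the trace and is independent of the honest nodes' local randomness, so conditioning on \(T\) leaves \(\Pr[h\ge1]\) as computed. Finally, I would note that in topology-constrained deployments the product \(c^K\) is replaced by the probability that every stage of a sampled route lands in \(\mathcal{C}\); the same argument applies with that probability in place of \(c^K\).
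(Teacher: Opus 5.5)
Your argument is correct and is essentially the paper's own calculation in Supplementary A.1. Independent per-stage compromise gives $c^K$ by independence and $K(1-c)$ by linearity. On the event of at least one hidden shuffle, $O^{-h}\le O^{-1}$ in Eq.~\ref{eq:explicit-ko-bound} yields the stated bound. For constrained topologies, $c^K$ is replaced by the route-specific all-compromised probability.

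You also flag something the paper never addresses: the theorem's $h$ is a minimum over all feasible trajectories, not the hidden-shuffle count of the realized one. Your shared-compromise-pattern idealization states explicitly an assumption the paper uses silently, and that is an improvement.

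Drop the final conditioning check, however, because it is false and not needed:
\begin{itemize}
\item Which stages are compromised depends on the sender's route sampling, and that sampling is honest local randomness.
\item $T$ contains the compromised relays' logs, so it is informative about the compromise pattern. Under your own idealization, $h$ is fully determined by $T$.
\item Hence $\Pr[h\ge 1\mid T]$ is $0$ or $1$, not $1-c^K$.
\end{itemize}
The corollary only needs the bound to hold pointwise for the realized $T$, on an event whose prior probability over route sampling is $1-c^K$.
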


Supplementary Material~A.2 handles acknowledgments by treating SURB return paths as independent of the forward trajectory. The guarantee remains conditional on the network-layer trace, so endpoint compromise or content-based linking is outside this theorem. It also assumes a passive adversary. Active attacks such as flooding, tagging, and dropping lie outside the bound. Sphinx integrity checks, cover packets, and randomized multi-hop release can make such attacks less direct, but a formal active-adversary analysis is future work.

\subsection{Learning Preservation of \texttt{FragFedAvg}}
\label{sec:convergence}
Fragmented aggregation preserves FedAvg's index-wise averaging behavior when parameter ranges receive sufficient peer coverage, with additional terms capturing incomplete coverage, fallback, and quantization. We use the standard assumptions for FedAvg-like methods: each local objective \(F_k\) is \(L\)-smooth, stochastic gradients are unbiased with variance bounded by \(\sigma_g^2\), and local model drift is controlled~\cite{mcmahan2017communication,karimireddy2020scaffold}. Let
\begin{equation}
    F(w)=\frac{1}{N}\sum_{k=1}^{N} F_k(w).
\end{equation}
In round \(t\), peer \(k\) performs \(\tau\) local SGD steps with step size \(\eta\), producing a shareable model state \(w^{(k)}_t\). A standard FedAvg-style update averages each parameter over the participating set \(\mathcal{C}_t\):
\begin{equation}
    \bar w_{t,i}
    = \frac{1}{|\mathcal{C}_t|}
      \sum_{k\in\mathcal{C}_t} w^{(k)}_{t,i}.
\end{equation}

\texttt{FragFedAvg} performs the same averaging by index range. Let \(P_t\) be the fragment pool at a node, let \(c_{t,i}\) be the number of available fragments covering parameter index \(i\), and let \(Q(\cdot)\) denote the stochastic quantizer used before lossless compression. We assume \(Q\) is unbiased and has bounded quantization variance \(\nu_q\), while compression is lossless. The aggregated value is
\begin{equation}
    \tilde w_{t,i}
    = \frac{1}{c_{t,i}}
      \sum_{f\in P_t:\, i\in f} Q(f[i]),
\end{equation}
with fallback to the pre-aggregation local value when \(c_{t,i}=0\).

\begin{theorem}[Learning preservation of \texttt{FragFedAvg}]
\label{thm:fragfedavg_convergence}
Let \(p_{t,i}=\Pr(c_{t,i}\ge 1)\), let \(\bar p=\min_i\inf_t p_{t,i}\), and let \(\bar K_{\mathrm{eff}}\) be the minimum expected number of contributors among indices with coverage. If the fallback deviation is bounded by \(\Delta_{\mathrm{fb}}\) and \(\eta \le 1/(2L)\), then the fragment-based iterate satisfies
\begin{equation}
\label{eq:frag-rate}
\begin{split}
    \frac{1}{T}\sum_{t=0}^{T-1}
    \mathbb{E}\|\nabla F(\tilde w_t)\|^2
    \le
    O\Big(
        \frac{1}{\eta T}
        + \eta L(\sigma_g^2+\nu_q)
        + \frac{1}{\bar p\,\bar K_{\mathrm{eff}}\tau}
        {} \\
        \qquad\qquad
        + (1-\bar p)\Delta_{\mathrm{fb}}
    \Big).
\end{split}
\end{equation}
\end{theorem}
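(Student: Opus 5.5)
We will prove Theorem~\ref{thm:fragfedavg_convergence} with a standard descent-lemma argument for local-SGD/FedAvg. The idea is to treat the fragment-based iterate \(\tilde w_t\) as a perturbed FedAvg iterate \(\bar w_t\) and control the perturbation \(e_t=\tilde w_t-\bar w_t\) index by index. For each index \(i\) we condition on whether \(c_{t,i}\ge 1\) and split
\begin{equation*}
e_{t,i}=\mathbf{1}[c_{t,i}\ge 1]\,(\tilde w_{t,i}-\bar w_{t,i})+\mathbf{1}[c_{t,i}=0]\,(w_{t,i}^{\mathrm{loc}}-\bar w_{t,i}).
\end{equation*}
The first (covered) term is further split into a quantization part and a subsampling part:
\begin{equation*}
\frac{1}{c_{t,i}}\sum_f\bigl(Q(f[i])-f[i]\bigr)\quad\text{and}\quad \frac{1}{c_{t,i}}\sum_f f[i]-\bar w_{t,i}.
\end{equation*}
The quantization part has zero mean because \(Q\) is unbiased, and its variance is at most \(\nu_q\). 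The subsampling part averages a random subset of peers' local states. We assume, as an idealization the mixnet is meant to justify, that which peers cover index \(i\) is independent of the states themselves. Under that assumption this part is unbiased for \(\bar w_{t,i}\) and has variance proportional to the client dispersion divided by \(c_{t,i}\). The second (fallback) term is bounded in squared norm by \(\Delta_{\mathrm{fb}}\) and occurs with probability \(1-p_{t,i}\le 1-\bar p\).

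Next we apply \(L\)-smoothness to \(F\) between consecutive iterates:
\begin{equation*}
\mathbb{E}F(\tilde w_{t+1})\le \mathbb{E}F(\tilde w_t)-\tfrac{\eta\tau}{2}\mathbb{E}\|\nabla F(\tilde w_t)\|^2+\text{(drift)}+\tfrac{L}{2}\mathbb{E}\|e_{t+1}\|^2+\langle\nabla F,\mathbb{E}e_{t+1}\rangle .
\end{equation*}
We take the local-drift term from the standard FedAvg analysis under the bounded-drift assumption cited in the paper~\cite{karimireddy2020scaffold}. With \(\eta\le 1/(2L)\), that term, together with the stochastic-gradient variance, contributes \(\eta L\sigma_g^2\). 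The quantization part contributes \(\eta L\nu_q\) after normalizing by \(\eta\tau\). The covered subsampling variance is at most \(1/(\bar p\,\bar K_{\mathrm{eff}})\) per unit dispersion. Here \(1/\bar p\) arises because we condition on coverage, which inflates the effective noise, and \(\bar K_{\mathrm{eff}}\) is the minimum expected number of contributors. Dividing by \(\eta\tau\) in the telescoping step yields the \(1/(\bar p\bar K_{\mathrm{eff}}\tau)\) term, with constants absorbed into \(O(\cdot)\). The fallback term is biased, so it enters through the inner product. We bound that inner product by Young's inequality, which gives \(\tfrac{\eta\tau}{4}\|\nabla F\|^2+O((1-\bar p)\Delta_{\mathrm{fb}})\); the first piece is absorbed into the descent term.

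Finally, we sum over \(t=0,\dots,T-1\), telescope \(F(\tilde w_0)-F^\star\), and divide by \(\eta\tau T/4\). This gives the \(1/(\eta T)\) term plus the three error terms in Eq.~\ref{eq:frag-rate}.

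The main obstacle is the covered subsampling term. Justifying that the covered set is independent of the values, so that the term is unbiased, requires the sender-independence that the mixnet supplies, and this has to be stated as an assumption. Converting \(\mathbb{E}[1/c_{t,i}\mid c_{t,i}\ge1]\) into \(1/(\bar p\bar K_{\mathrm{eff}})\) is also delicate, because Jensen's inequality runs the wrong way for \(1/c\). I would first try to absorb this through the definition of \(\bar K_{\mathrm{eff}}\), taking it as the harmonic-mean effective count, and flag that choice explicitly.
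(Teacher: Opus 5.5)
Your architecture matches the paper's proof: \texttt{FragFedAvg} is treated as a FedAvg step perturbed by unbiased quantization noise, reduced coverage and fallback bias, then closed with Young's inequality and telescoping. The gap is the step where you divide the per-round inequality by \(\eta\tau\) and absorb constants into \(O(\cdot)\). Since \(\eta\) and \(\tau\) appear explicitly in Eq.~\eqref{eq:frag-rate}, this only works if each perturbation carries the right powers of \(\eta\), and yours do not.
\begin{itemize}
\item \emph{Quantization.} With the stated absolute bound \(\mathbb{E}\|Q(x)-x\|^2\le\nu_q\), the remainder \(\tfrac{L}{2}\mathbb{E}\|e_{t+1}\|^2\) contributes \(\tfrac{L}{2}\nu_q\) per round. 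After normalization this is \(L\nu_q/(\eta\tau)\), not \(\eta L\nu_q\).
\item \emph{Fallback.} Young on the fallback bias gives \(\tfrac{\eta\tau}{4}\|\nabla F(\tilde w_t)\|^2+\tfrac{1}{\eta\tau}\|\mathbb{E}[e^{\mathrm{fb}}_{t+1}\mid\tilde w_t]\|^2\). If \(\Delta_{\mathrm{fb}}\) bounds a squared parameter deviation, the resulting floor carries a factor \((\eta\tau)^{-2}\). The second-moment term adds \(L(1-\bar p)\Delta_{\mathrm{fb}}/(\eta\tau)\). Neither is \(O((1-\bar p)\Delta_{\mathrm{fb}})\).
\item \emph{Subsampling.} The subsampling variance is the dispersion of the local states divided by the contributor count. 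After \(\tau\) local steps that dispersion is of order \(\eta^2\tau^2\zeta^2\), for a gradient-dissimilarity constant \(\zeta^2\). Normalization then gives a coverage term of order \(L\eta\tau\zeta^2\) over the (harmonic) contributor count. This grows with \(\tau\) and vanishes with \(\eta\), unlike \(1/(\bar p\bar K_{\mathrm{eff}}\tau)\).
\end{itemize}
Carried out honestly, your descent-lemma derivation proves a bound with different \(\eta,\tau\) dependence, not the stated one.

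The paper gets the stated form by postulating it at the level of the per-round descent inequality, not by deriving it.
\begin{itemize}
\item Assumption~\ref{supp:asm:base-rate} takes as given a FedAvg inequality that already contains the participation term \(C\eta/(K_{\mathrm{eff}}\tau)\).
\item The proof keeps the contributors inside \(\bar w_t\), rather than putting subsampling into the perturbation as you do, and substitutes \(\bar p\bar K_{\mathrm{eff}}\) for \(K_{\mathrm{eff}}\).
\item \(\Delta_{\mathrm{fb}}\) is defined as the induced \emph{optimization error} of fallback, so it enters per round as \(C\eta(1-\bar p)\Delta_{\mathrm{fb}}\).
\item Quantization noise is asserted to enter with the stochastic-gradient multiplier, i.e.\ as \(C\eta^2L\nu_q\) per round.
\item The remaining deviation \(\tilde w_t-\bar w_t\) is transferred through \(\|\nabla F(\bar w_t)-\nabla F(\tilde w_t)\|\le L\|\bar w_t-\tilde w_t\|\) and Young.
\end{itemize}
Your explanation of the \(1/\bar p\) factor is also off. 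Restricting to covered indices multiplies the variance by \(\Pr(c_{t,i}\ge 1)\le 1\) and does not inflate it; the paper's \(1/\bar p\) comes from reading \(\bar p\bar K_{\mathrm{eff}}\) as an unconditional effective participation count.

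To reach Eq.~\eqref{eq:frag-rate} you need to adopt these conventions explicitly as assumptions. One alternative is to quantize gradient-scale updates with a relative-variance quantizer and normalize \(\Delta_{\mathrm{fb}}\) by \((\eta\tau)^2\). Otherwise, state the bound with its actual \(\eta,\tau\) dependence. The two obstacles you flag are real: independence of the covered set from the fragment values, and the wrong-way Jensen for \(\mathbb{E}[1/c_{t,i}\mid c_{t,i}\ge 1]\). The paper does not resolve them either; it only uses the expected-contributor lower bound \(\bar K_{\mathrm{eff}}\).
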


Supplementary Material~A.3 (``Learning Preservation of \texttt{FragFedAvg}'') proves the bound. The first two terms match the usual optimization and stochastic-gradient terms, with \(\nu_q\) capturing the additional variance introduced by unbiased quantization. The third term reflects reduced effective participation caused by incomplete fragment coverage. The final term captures fallback bias when an index receives no peer fragment before aggregation. When \(\bar p=1\) and \(\nu_q=0\), \texttt{FragFedAvg} reduces to FedAvg over the same participating models. As coverage increases and quantization variance remains small, the extra terms become secondary. The analysis assumes the pool aggregates fragments produced for the current round. Cross-round fragment staleness under fully asynchronous schedules is not modeled and is left to future work.

\subsection{Robustness of \texttt{FragKrum}}
\label{subsec:fragkrum_analysis}
Anonymous aggregation removes sender labels from the fragment pool, which also removes simple sender-based reputation or filtering. Here, Byzantine slices are malicious or faulty fragment values that may be arbitrary rather than honest training outputs. \texttt{FragKrum} addresses this setting by applying Krum independently to the values available for each index range. Let \(Y_r\) be the candidate fragment slices for range \(r\), and let \(B_r\) be an upper bound on Byzantine slices in \(Y_r\). Here \(B_r\) is a configured robustness budget set from the assumed adversary fraction, analogous to the parameter \(f\) in standard Krum, rather than a quantity estimated online from the anonymous pool.

\begin{theorem}[Fragment-level Krum robustness]
\label{thm:fragkrum_robustness}
For an index range \(r\), assume \(Y_r\) contains \(n_r\) candidate slices, at most \(B_r\) of them are Byzantine, and \(n_r > 2B_r+2\). If honest slices for \(r\) have bounded diameter and Byzantine slices are arbitrary, the \texttt{FragKrum} rule in Algorithm~\ref{alg:fragagg} selects a slice whose Krum score is no larger than that of any honest candidate. Consequently, the selected slice lies in the robust neighborhood characterized by the standard Krum condition~\cite{blanchard2017machine}, without requiring sender identities.
\end{theorem}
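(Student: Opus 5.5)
The plan is to reduce the statement to the classical Krum argument of~\cite{blanchard2017machine}, applied one index range at a time. The key observation is that Algorithm~\ref{alg:fragagg} treats each range \(r\) independently. Its only inputs for \(r\) are the multiset \(Y_r\) of slice values and the configured budget \(B_r\); no sender labels enter. So it suffices to fix \(r\), write \(n_r=|Y_r|\) and \(q_r=n_r-B_r-2\), and argue purely about the finite point set \(Y_r\subset\mathbb{R}^{e-s}\).

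The first step is well-definedness. From \(n_r>2B_r+2\) we get \(q_r\ge B_r+1\ge 1\) and \(q_r\le n_r-1\), so every candidate \(y\) has at least \(q_r\) other slices. Hence the neighbor set \(N_y\) and the score \(s_y\) are defined. The second step is the selection claim. Since \(y^\star=\arg\min_{y\in Y_r}s_y\) over all candidates, honest ones included, \(s_{y^\star}\le s_y\) holds for every honest \(y\) immediately. This gives the first sentence of the theorem without further work.

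The third step makes the "robust neighborhood" statement concrete by bounding the honest scores. Let \(D\) be the honest diameter. There are at least \(n_r-B_r\) honest slices, so any honest \(y\) has at least \(n_r-B_r-1=q_r+1\ge q_r\) other honest slices. Its \(q_r\) nearest neighbours are therefore each within distance \(D\), giving \(s_y\le q_rD^2\) and hence \(s_{y^\star}\le q_rD^2\). If \(y^\star\) is honest, it lies within \(D\) of the honest set and we are done. If \(y^\star\) is Byzantine, we use counting. Its \(q_r\) nearest neighbours include at most \(B_r\) Byzantine slices, so at least \(q_r-B_r\ge 1\) of them are honest. Each such honest neighbour \(z\) satisfies \(\|y^\star-z\|_2^2\le s_{y^\star}\le q_rD^2\), so \(y^\star\) is within \(\sqrt{q_r}\,D\) of some honest slice. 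That puts it within \((\sqrt{q_r}+1)D\) of every honest slice, and in particular of the honest mean.

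If the "standard Krum condition" is read in the stochastic form of~\cite{blanchard2017machine}, namely \((\alpha,f)\)-Byzantine resilience with i.i.d. honest slices of mean \(g_r\) and variance \(\sigma_r^2\), the final step replaces the diameter bound. We would instead bound the expected score of an honest candidate by \(q_r\) times the pairwise honest squared distance, then repeat the argument of~\cite[Prop.~1]{blanchard2017machine} verbatim with \((n,f)\mapsto(n_r,B_r)\). The condition \(n_r>2B_r+2\) is exactly their hypothesis \(n>2f+2\). I expect this step to be the main obstacle. Blanchard et al.\ assume the Byzantine workers are a fixed set of size \(f\) and that honest gradients are independent, but here \(B_r\) is only a configured budget and dedup or partial coverage could make the honest slices in \(Y_r\) correlated or fewer than assumed. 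My plan is to state that the reduction holds conditionally on the event that \(Y_r\) contains at most \(B_r\) Byzantine slices and that the honest slices satisfy the bounded-diameter (or bounded-variance) hypothesis. I would also note that sender anonymity is irrelevant, because Krum is permutation-invariant and label-free.
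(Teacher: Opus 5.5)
Your proof is correct and follows the paper's argument. Both bound an honest candidate's score by \(q_r\delta_r^2\) using its at least \(q_r\) honest neighbours, invoke the argmin, and then count at least \(q_r-B_r\) honest slices among the \(q_r\) nearest neighbours of \(y^\star\). The paper differs in two ways. It averages the squared distances over those honest neighbours, which gives the sharper bound \(\min_{h\in H_r}\|y^\star-h\|_2^2\le \frac{q_r}{q_r-B_r}\delta_r^2\) in place of your single-neighbour bound \(q_rD^2\). It also reads ``robust neighborhood'' purely through this deterministic diameter bound, so the stochastic \((\alpha,f)\)-resilience step you flag as the main obstacle is not needed.
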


For each range, \texttt{FragKrum} constructs the same nearest-neighbor score used by Krum, but the candidates are fragment slices rather than full model vectors. The condition \(n_r>2B_r+2\) ensures that the nearest-neighbor set of an honest slice contains enough honest candidates to dominate the score. Since the score depends only on slice distances, sender labels are not used. Supplementary Material~A.4 (``Robustness of \texttt{FragKrum}'') gives the reduction from range-wise slices to the standard Krum argument and discusses partial coverage.

The robustness guarantee is therefore local to covered ranges. If a range has too few candidates or too many Byzantine slices, \texttt{FragKrum} falls back to the same limitation as Krum under insufficient honest majority.
% Experiments
\section{Evaluation}
\label{sec:evaluation}
This section evaluates the prototype in Section~\ref{sec:implementation}. The evaluation first measures learning utility as the deployment scales, then quantifies how mixing, cover packets, randomized delays, and path length shape network-layer sender--message linking. The remaining experiments report mixnet cost, payload handling, churn and Byzantine robustness, and a curious-recipient boundary test.

\subsection{Experimental Setup}
\label{sec:exp-setup}

Each node runs training, route sampling, mixing, acknowledgment handling, fragment storage, aggregation, and local metric collection. The prototype supports two interchangeable single-machine experimental deployments selected by configuration: one container per node over a virtual network, or one OS process per node pinned to a core over loopback. Both use authenticated QUIC links between transport neighbors, so the deployment mode does not change the learning, routing, or unlinkability mechanisms. Unless varied, experiments use \(N=32\), a degree-4 circulant topology, \(K=2\), \(O=100\), \(\mu=0.1\)~s, \(\sigma=0.001\), LeNet-5, 10 rounds, Dirichlet partitioning with \(\alpha=10\), 8-bit stochastic quantization with lossless compression, and \texttt{FragFedAvg}.

\begin{table}[!h]
\centering
\caption{Summary of evaluation settings.}
\label{tab:eval-settings}
\begin{tabular}{@{}>{\raggedright\arraybackslash}p{0.25\linewidth}@{\hspace{0.35em}}>{\raggedright\arraybackslash}p{0.22\linewidth}@{\hspace{0.35em}}>{\raggedright\arraybackslash}p{0.49\linewidth}@{}}
\hline
\textbf{Experiment} & \textbf{Variable} & \textbf{Value} \\
\hline
Learning Utility & \(N\) & 16, 32, 64, 100 \\
 & Dataset & MNIST, Fashion-MNIST, CIFAR-10 \\
 & Model & LeNet-5 \\
 & Topology & degree-4 circulant \\
\hline
Linking attack & Ablation & direct, no-cover, no-delay, \(K{=}1\) \\
\hline
Mixnet parameters & Path length \(K\) & 1, 2, 3, 4, 5 \\
 & Outbox size \(O\) & 10, 25, 50, 100, 200 \\
 & Delay mean \(\mu\) & 0.04, 0.10, 0.32~s \\
\hline
Payload encoding & Encoding & 8-bit, 32-bit \\
 & Compression & lossless \\
 & Model & LeNet-5, SqueezeNet, MobileNetV2 \\
\hline
Partial exchange & Ratio \(r\) & 0.10, 0.25, 0.50, 0.75, 1.00 \\
\hline
Content attack & Heterogeneity \(\alpha\) & 0.1, 0.5, 1, 10 \\
 & Ratio \(r\) & 0.10, 0.25, 0.50, 0.75, 1.00 \\
 & Target & identity linking, bucket tracking \\
\hline
Stressors & Churn & node join, node exit \\
 & Byzantine attack & label-flip, Gaussian noise \\
 & Aggregator & \texttt{FragFedAvg}, \texttt{FragKrum} \\
\hline
\end{tabular}
\end{table}

Experiments ran on an Ubuntu server with an AMD EPYC 7502P processor, 50 visible cores, and 94~GiB memory. Each node is allocated one vCPU-equivalent for repeatability, a container CPU quota under the container deployment or a pinned core under the process deployment, limiting peak CPU use per node without reserving a physical core. Table~\ref{tab:eval-settings} lists the variables and values.

\subsection{Learning Utility at Scale}
\label{sec:eval-scale}

\begin{figure}[t]
    \centering
    \includegraphics[width=\linewidth]{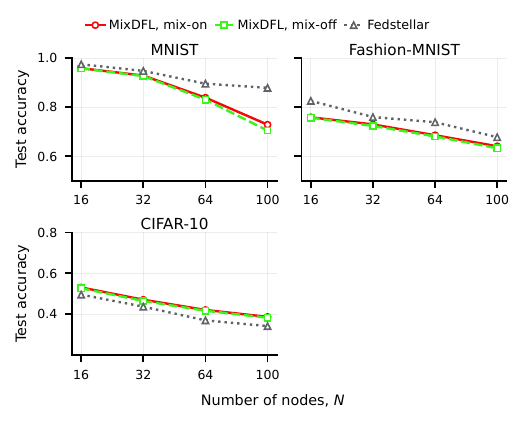}
    \caption{Test accuracy as the network scales, for MNIST, Fashion-MNIST, and CIFAR-10 (LeNet-5).}
    \label{fig:eval-scale-accuracy}
\end{figure}

This experiment measures learning utility as \(N\) grows and each peer holds less data. It uses the default topology and mixnet parameters, LeNet-5, ten rounds, Dirichlet partitioning with \(\alpha=10\), 8-bit stochastic quantization with lossless compression, and \texttt{FragFedAvg}. Here \(\alpha\) controls data heterogeneity, with larger values closer to independent and identically distributed (IID) partitions. Figure~\ref{fig:eval-scale-accuracy} reports round-10 test accuracy against the non-anonymous Fedstellar baseline~\cite{MARTINEZBELTRAN2024122861}, and Supplementary Material~C.1 gives the full convergence curves.

\textit{UnlinkableDFL} scales to \(N=100\) on all three tasks while keeping fragment completeness at 100\%. Accuracy decreases with \(N\), from 0.959 to 0.729 on MNIST, 0.759 to 0.641 on Fashion-MNIST, and 0.531 to 0.387 on CIFAR-10, because each node trains on a smaller data partition. The mixnet-on and mixnet-off curves closely match, so the mixnet does not measurably degrade accuracy. Relative to the non-anonymous Fedstellar baseline the gap depends on the dataset: \textit{UnlinkableDFL} is ahead on CIFAR-10 (0.387 versus 0.342 at \(N=100\)) but behind on MNIST and Fashion-MNIST (0.729 versus 0.879 on MNIST at \(N=100\)). The framework thus scales to data-sparse deployments without the mixnet degrading accuracy, while its standing against a non-anonymized platform varies by task.

\subsection{Network-Layer Unlinkability Under Attack}
\label{sec:eval-unlink}

This experiment estimates the unlinkability advantage of Eq.~\ref{eq:adversarial_advantage} in the open world, where the candidate set \(\mathcal{U}_T\) contains all trace-consistent senders. A passive adversary tries to link a target model-fragment packet to its sender, and the privacy claim is checked by attack accuracy rather than entropy alone. It pairs this attack with the entropy parameters \(K\) and \(O\), complementing the analysis in Section~\ref{sec:properties}.

\paragraph{\textbf{Entropy parameters}}
\label{sec:eval-entropy}
We first summarize how the two controls change uncertainty before relating them to attack success. The reported \(H_{\mathrm{path}}\) is the uniform-counting special case of \(H_{\mathrm{path}}^{\mathrm{cnt}}\) from Section~\ref{subsec:analysis_scope}. It counts feasible shuffled paths under uniform route sampling and relay shuffling, rather than estimating an adversarial posterior. For the relay-level view, we use \(H_{\mathrm{relay}}^{\mathrm{loc}}=-p_{\mathrm{abs}}\log_2 p_{\mathrm{abs}}-O p_{\mathrm{fwd}}\log_2 p_{\mathrm{fwd}}\), where \(p_{\mathrm{abs}}=1/K\) and \(p_{\mathrm{fwd}}=(K-1)/(KO)\). This metric describes uncertainty inside one relay outbox, while Theorem~\ref{thm:network_unlinkability} bounds posterior sender-linking after conditioning on the network trace.

\begin{figure}[!t]
    \centering
    \includegraphics[width=\linewidth]{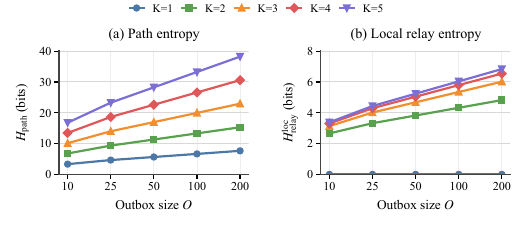}
    \caption{Path entropy (left) and per-relay outbox entropy (right) as \(K\) and \(O\) vary.}
    \label{fig:eval-entropy-structure}
\end{figure}

The direct entropy controls are the maximum path length \(K\) and the outbox size \(O\). The counted path space scales as \(O+O^2+\cdots+O^K\). At \(O=100\), increasing \(K\) from 1 to 5 raises \(H_{\mathrm{path}}\) from 6.64 to 33.23 bits, while \(H_{\mathrm{relay}}^{\mathrm{loc}}\) rises from 0 to 6.04 bits because \(K=1\) has no relay-level forwarding ambiguity. At \(K=2\), increasing \(O\) from 10 to 100 raises \(H_{\mathrm{path}}\) from 6.78 to 13.30 bits and \(H_{\mathrm{relay}}^{\mathrm{loc}}\) from 2.66 to 4.32 bits. Thus \(K\) is the coarse entropy lever, while \(O\) tunes the relay-shuffle contribution inside a chosen path length.

Cover packets support these levers. They keep the outbox population near \(O\) even when a node holds few real fragments, so the per-hop \(1/O\) shuffle uncertainty is maintained rather than collapsing to the real-queue length. They also reduce packet volume as a sender signal. As the ablation below shows, cover packets add a small privacy margin on top of multi-hop mixing rather than driving the protection.

Topology and degree bound the trace-consistent candidate set over which uncertainty can be distributed. At the default \(N=32\), \(K=2\), and degree-4 circulant topology, a node reaches 12 peers within two hops, while a full mesh exposes \(N-1=31\). Raising degree expands this set until saturation. A diameter-oriented circulant saturates with fewer links than a local ring lattice, but may concentrate traffic on fewer transport edges. Supplementary Material~C.2 provides degree, topology, relay-entropy, and cost details.

\paragraph{\textbf{Sender-linking attack and ablation}}
No off-the-shelf attack fits peer-run DFL, so we build a passive Bayesian adversary that estimates the bounded quantity \(p_{\mathrm{link}}(m\mid T)=\max_{u\in\mathcal{U}_T}\Pr[S(m)=u\mid T]\) by fusing three classical mixnet signals: a \emph{predecessor} channel from the visible last hop and any compromised-relay segment, a \emph{timing} channel that maps the arrival time back through the delay law to an emission window, and a \emph{volume} channel from above-cover outbound activity. Suspects are the trace-consistent origins within the hop budget, the maximum-a-posteriori one is the guess, and we report top-1 accuracy against the random baseline \(1/|\mathcal{U}_T|\). This baseline differs across ablations because the trace-consistent candidate set \(\mathcal{U}_T\) depends on the hop budget, shrinking to the degree-four neighborhood at \(K{=}1\). The adversary sees only the recorded edge-level trace, in which Sphinx re-randomization blocks cross-hop linking, while the true origin stays sealed in the onion for scoring only. We ablate the mechanisms (direct, no-cover, no-delay, \(K{=}1\), full) and sweep \(K\), \(O\), and the compromised fraction \(c\) at \(N=32\). Supplementary Material~C.2 gives the trace fields, feature definitions, and the attack procedure.

\begin{table}[t]
\centering
\caption{Sender-linking ablation at \(N=32\) on MNIST: attack accuracy (mean \(\pm\) standard deviation), random baseline, and CPU time per round.}
\label{tab:eval-ablation}
\begin{tabular}{@{}lccc@{}}
\hline
\textbf{Configuration} & \textbf{Attack acc.} & \textbf{Baseline} & \textbf{CPU (s/round)} \\
\hline
Direct (no mixnet) & \(0.99\pm0.01\) & 0.08 & 6.6 \\
Single hop (\(K{=}1\)) & \(1.00\pm0.00\) & 0.25 & 9.5 \\
No cover packets & \(0.10\pm0.02\) & 0.08 & 9.7 \\
No randomized delay & \(0.06\pm0.02\) & 0.08 & 296.7 \\
Full \textit{UnlinkableDFL} & \(0.08\pm0.01\) & 0.08 & 14.2 \\
\hline
\end{tabular}
\end{table}

\begin{figure}[t]
    \centering
    \includegraphics[width=\linewidth]{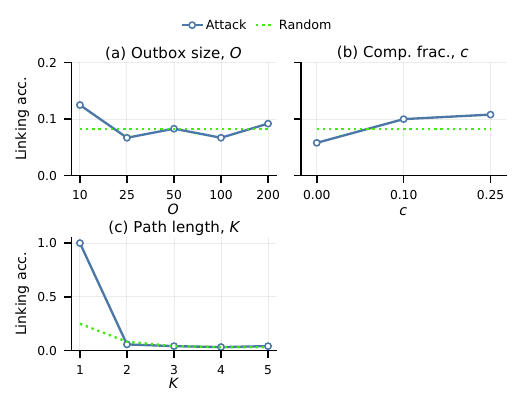}
    \caption{Sender-linking accuracy across the \(K\), \(O\), and \(c\) sweeps at \(N=32\), against the random baseline.}
    \label{fig:eval-linking-attack}
\end{figure}

Table~\ref{tab:eval-ablation} reports the ablation results, and Figure~\ref{fig:eval-linking-attack} presents the parameter sweeps. The three mechanisms play distinct roles. \emph{Multi-hop mixing} is the primary privacy lever. Without the mixnet, direct DFL is almost fully linkable (0.99) because the last transport hop is the sender, and a single hop (\(K=1\)) is equally exposed (1.00) since the first relay is the sender's neighbor. With two or more hops the attack collapses to the random baseline, where the full configuration matches the 0.08 baseline (\(0.08\pm0.01\)), accuracy drops sharply at \(K=2\) and stays there through \(K=5\), and the outbox size \(O\) has no further effect once \(K\ge 2\). \emph{Cover packets} add a small privacy margin. Removing them raises linking to 0.10, just above the 0.08 baseline, because a node that transmits only when it holds real fragments leaves a weak volume signal that cover-packet padding otherwise masks. \emph{Randomized delay} does not change linkability under this passive attack, since removing it leaves the attack at 0.06, near the baseline, but it is what keeps the cost bounded. Here removing the delay means releasing at a near-zero interval rather than under a rate-limited scheduler, so the outbox keeps filling empty slots with cover packets and floods the network. CPU therefore rises from 14.2 to 296.7~s/round while the attack does not improve. The one lever that helps the adversary is relay compromise, where accuracy rises from 0.06 to 0.11 as the compromised fraction grows from 0 to 0.25, since fewer hidden shuffles remain when more relays on a path are compromised. The CPU column quantifies the cost of this protection, with the full configuration reaching near-baseline unlinkability at 14.2~s/round against 6.6~s/round for direct DFL.

\subsection{Unlinkability--Cost Trade-off}
\label{sec:eval-cost}

Enabling the mixnet sharply increases the per-round cost while leaving the learning result close to the mixnet-off run. We report two distinct quantities: CPU time, the compute consumed per round, and wall-clock round duration, the elapsed time per round, which differ because randomized release and buffering add waiting that does not consume compute. At \(N=32\) on MNIST under the default degree-four topology, with \(K=2\), \(O=100\), and \(\mu=0.1\), turning mixing on raises CPU time from 6.6 to 14.2~s/round and wall-clock round duration from 6.3 to 40.5~s, and increases the bytes a node sends per round several-fold as cover-packet padding replaces direct delivery, from about 12 to 60~MB on MNIST. Wall-clock round duration grows faster than CPU time because randomized release and buffering inflate elapsed time without proportional compute. This is the dominant prototype cost of unlinkability, and its dependence on graph degree and network size is reported in Supplementary Material~C.2.

\begin{table}[t]
\centering
\caption{Cost for the mixnet parameters at \(N=32\).}
\label{tab:eval-parameters}
\setlength{\tabcolsep}{2.5pt}
\begin{tabular}{@{}cccccc@{}}
\hline
\textbf{Parameter} & \textbf{Value} & \(H_{\mathrm{path}}\) (bits) & \textbf{RTT (s)} & \textbf{CPU (s/round)} & \textbf{Mem. (MB)} \\
\hline
\(\mu\) & 0.04 & 13.30 & 12.0 & 20.0 & 393 \\
        & 0.10 & 13.30 & 29.7 & 14.2 & 393 \\
        & 0.32 & 13.30 & 94.4 & 11.8 & 393 \\
\hline
\(K\)  & 1 & 6.64 & 7.4 & 9.5 & 390 \\
        & 2 & 13.30 & 29.7 & 14.2 & 393 \\
        & 5 & 33.23 & 245.1 & 73.7 & 405 \\
\hline
\(O\)  & 10 & 6.78 & 12.7 & 10.0 & 391 \\
        & 100 & 13.30 & 29.7 & 14.2 & 393 \\
\hline
\end{tabular}
\end{table}

Once mixing is enabled, the mixnet parameters also shape the runtime cost. Table~\ref{tab:eval-parameters} summarizes this parameter trade-off at \(N=32\). Path length \(K\) gives the strongest entropy gain, but dominates RTT and CPU cost. Increasing \(O\) gives a smaller entropy gain with modest CPU and memory changes, while \(\mu\) mainly shifts delay and cover-packet pressure. These research-prototype latencies make moderate \(K\) the practical region for rounds that tolerate tens of seconds. Total network bytes including cover packets, and the per-node message load as the deployment scales, are reported in Supplementary Material~C.2.

\subsection{Payload Handling and Partial Updates}
\label{sec:eval-payload}

Payload handling is the main scalability mechanism outside the mixnet. With a 10240-byte Sphinx body, a 32-bit LeNet-5 update requires 26 fragments, while the 8-bit encoded version requires 7. This 3.7\(\times\) fragment-count reduction is close to the expected 4\(\times\) byte saving. Larger models make this more important, as SqueezeNet and MobileNetV2 otherwise push per-round traffic into hundreds of megabytes or more at high \(N\). Supplementary Material~C.1 reports the per-model fragment counts and per-round bytes for LeNet-5, SqueezeNet, and MobileNetV2. In the convergence corpus, 8-bit and 32-bit trajectories stay within 0.16 percentage points on MNIST and 0.06 on Fashion-MNIST, as reflected by Figure~\ref{fig:eval-payload}(a). Figure~\ref{fig:eval-payload}(c) shows the corresponding measured workload reduction on both datasets, from about 204 to about 54 meaningful model-fragment packets per peer and round.

\begin{figure}[t]
    \centering
    \includegraphics[width=\linewidth]{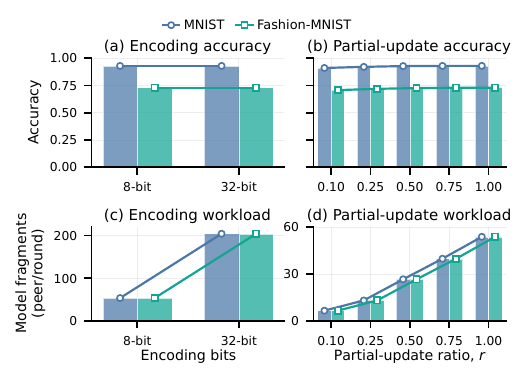}
    \caption{Payload handling at \(N=32\). Top: accuracy under encoding and partial-update choices. Bottom: meaningful fragments per peer-round by dataset.}
    \label{fig:eval-payload}
\end{figure}

Figure~\ref{fig:eval-payload}(b) reports partial exchange. On MNIST, round-10 aggregated accuracy rises from 91.00\% at \(r=0.10\) to 92.94\% at \(r=1.00\). Fashion-MNIST follows the same trend, increasing from 70.69\% to 72.96\%. Figure~\ref{fig:eval-payload}(d) shows the expected workload scaling on both datasets, from about 6.6 to about 53.8 meaningful model-fragment packets per peer and round. Because the mixnet pads outboxes with cover packets, reducing \(r\) does not directly lower the observed packet rate. It instead shortens the meaningful exchange workload and reduces delivered model state. Supplementary Material~C.1 reports fragment counts, bytes per round, quantization, and full partial-update trajectories.

\subsection{Churn and Byzantine Robustness}
\label{sec:eval-stressors}

This experiment evaluates fragment exchange under peer churn and malicious fragment values. Churn uses the default \(N=32\), degree-four circulant topology, and \(K=2\). Byzantine robustness uses \(N=20\) on a full mesh so every aggregator observes the same corrupted candidates.

\paragraph{\textbf{Churn}} We define two churn scenarios: late join and peer exit.
\begin{itemize}
    \item \textit{Late join:} one peer is absent from rounds 1--2 and starts participating in round 3. The stable-peer curve excludes the joiner and shows how quickly the new peer aligns after receiving fragments.
    \item \textit{Peer exit:} one peer leaves after round 5. Remaining peers are grouped by exposure: \textit{direct} transport neighbors lose one QUIC neighbor, \textit{overlay} peers lose a two-hop route, and \textit{unaffected} peers are outside the departed peer's two-hop exchange region.
\end{itemize}

\begin{figure}[!t]
    \centering
    \includegraphics[width=\linewidth]{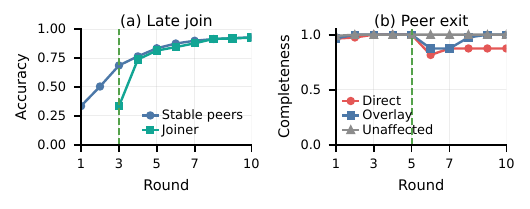}
    \caption{Churn at \(N=32\). Left: accuracy after one peer joins at round 3. Right: fragment completeness after one peer exits.}
    \label{fig:eval-churn}
\end{figure}

Figure~\ref{fig:eval-churn} reports both runs. Fragment completeness is the fraction of expected peer fragments received in a round, normalized by the stable no-churn plan. The late joiner catches up within two rounds, so joining does not require a global restart. After an exit, \textit{unaffected} peers remain stable, \textit{overlay} peers recover after rerouting, and \textit{direct} transport neighbors stay below full completeness because some strict two-hop routes no longer exist. Churn therefore degrades coverage locally and temporarily.

\paragraph{\textbf{Byzantine robustness}}
For anonymous Byzantine fragments, \(b\) malicious peers are selected and their corrupted model values are fragmented and delivered without sender labels. Following DFL robustness stress tests such as DART~\cite{feng2024dart}, we sweep \(b=0,\ldots,10\) at \(N=20\) over ten rounds with two attacks. Under \textit{label flip}, malicious peers train on flipped labels before producing local models. Under \textit{Gaussian noise}, they add zero-mean noise with standard deviation ten times the honest-update standard deviation.

Figure~\ref{fig:eval-krum} compares \texttt{FragFedAvg} and \texttt{FragKrum}. The green dashed boundary marks \(b=8\), the largest Byzantine count satisfying \(N>2b+2\) for \(N=20\). \texttt{FragFedAvg} degrades under label flip and collapses under high-magnitude Gaussian noise because one corrupted vector can dominate the mean. \texttt{FragKrum} stays near the clean run through the covered range and often beyond it, except for the expected failure at the largest label-flip count. This supports \texttt{FragKrum} as an optional robustness mode for anonymous fragment pools. Supplementary Material~C.3 gives supporting plots.

\begin{figure}[t]
    \centering
    \includegraphics[width=\linewidth]{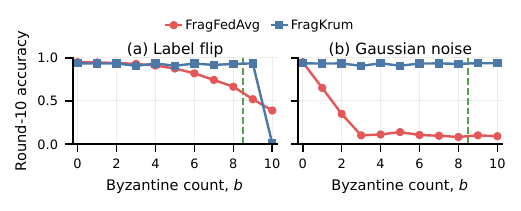}
    \caption{Accuracy under Byzantine fragments at \(N=20\), full mesh. Green dashed line: the Krum coverage limit.}
    \label{fig:eval-krum}
\end{figure}

\subsection{Residual Content-Based Linkability}
\label{sec:eval-content-linking}

The framework's privacy claim is deliberately network-layer. We therefore include a curious-recipient attack that operates after fragments are decrypted by their destination. This gives the attacker more information than the adversary in Section~\ref{sec:threat_model}, so the experiment is a boundary test rather than the guarantee evaluated in Theorem~\ref{thm:network_unlinkability}.

\begin{figure}[t]
    \centering
    \includegraphics[width=\linewidth]{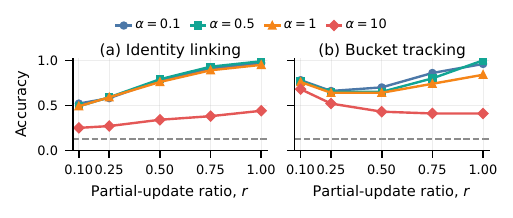}
    \caption{Curious-recipient content attack. Left: linking buckets to identities. Right: grouping same-sender fragments.}
    \label{fig:eval-content-attack}
\end{figure}

\begin{itemize}
    \item \textit{Recipient view:} the recipient gets a full first-round update as a reference. Other peers aggregate fully, while the recipient receives partial fragments by the ratio \(r\).
    \item \textit{Content comparison:} from round two, the recipient subtracts each decrypted fragment from its local model and compares fragment differences by cosine distance.
    \item \textit{Targets:} \textit{bucket tracking} groups fragments from the same hidden sender across rounds. \textit{Identity linking} further assigns each bucket to a concrete sender identity.
\end{itemize}

Figure~\ref{fig:eval-content-attack} shows that model content can remain linkable after network anonymization. Smaller \(\alpha\) and larger \(r\) strengthen payload fingerprints. Full updates reach 0.95--0.99 identity-linking accuracy for \(\alpha\le 1\), while near-IID data at \(\alpha=10\) drops to about 0.44. At \(\alpha=0.1\), reducing \(r\) from 1.00 to 0.10 lowers identity linking from 0.97 to 0.52 and bucket tracking from 0.96 to 0.78. Thus, \textit{UnlinkableDFL} protects network-layer sender--message links for model-fragment packets, while payload-layer fingerprints remain outside its defense.
% Conclusion
\section{Conclusion}
\label{sec:conclusion}
This paper presented \textit{UnlinkableDFL}, a DFL framework that addresses network-layer linkability in peer-to-peer model sharing. Each participant acts as both learner and mix relay: model states are fragmented into uniform onion-encrypted packets, carried through independently sampled mixnet paths with cover packets and randomized delays, and aggregated locally from a fragment pool without sender identities. The analysis formalizes the protected sender--message link, relates path and relay uncertainty to posterior linking advantage, and shows when fragmented aggregation preserves FedAvg-style behavior, with \texttt{FragKrum} available for anonymous Byzantine fragments.

The prototype results show that this protection comes with explicit cost. \textit{UnlinkableDFL} preserves learning utility as the deployment scales, drives a passive network-layer linking attack to the random baseline under the full design, and exposes the expected latency and traffic overheads of mixing. Its scope is deliberately bounded: once a recipient decrypts fragments, model content can still carry payload-layer fingerprints. Network-layer unlinkability should therefore complement model-layer defenses such as Shatter and DivShare. Future work should reduce mixnet latency, tune routing and rate parameters to workload and topology, and combine the framework with payload-layer obfuscation.

% the acknowledgment should be added when CRC
% \section*{Acknowledgment}
% The preferred spelling of the word ``acknowledgment'' in America is without 
% an ``e'' after the ``g''. Avoid the stilted expression ``one of us (R. B. 
% G.) thanks $\ldots$''. Instead, try ``R. B. G. thanks$\ldots$''. Put sponsor 
% acknowledgments in the unnumbered footnote on the first page.

\bibliographystyle{ieeetr}
\bibliography{reference}

\noindent \small{\\All links above were last accessed on \today.}

\clearpage
\onecolumn
\appendices
\section*{Supplementary Material}
\setcounter{theorem}{0}
\setcounter{lemma}{0}
\setcounter{assumption}{0}
\setcounter{remark}{0}
\setcounter{figure}{0}
\setcounter{table}{0}
\renewcommand{\thefigure}{\thesection.\arabic{figure}}
\renewcommand{\thetable}{\thesection.\arabic{table}}

\section*{Overview}

This supplementary material is organized into three categories. Appendix~\ref{supp:proofs} collects the full proofs for the theoretical claims stated in the main paper, including network-layer unlinkability, Single-Use Reply Block (SURB) acknowledgment independence, \texttt{FragFedAvg} learning preservation, and \texttt{FragKrum} robustness. Appendix~\ref{supp:implementation} summarizes prototype configuration, experiment orchestration, and interface screenshots. Appendix~\ref{supp:experiments} provides extended experimental settings and supporting results that complement the main evaluation.

The proofs are conditional on the threat model in the main paper. In particular, the network-layer results concern observations available from packet timing, routing, compromised relay views, and visible packet metadata. They do not imply differential privacy for the learned model and do not remove content-based fingerprints available after a destination decrypts a fragment.

\section{Proofs}
\label{supp:proofs}

\subsection{Network-Layer Unlinkability}
\label{supp:network}

\subsubsection{Notation}

Let \(m\) be a target fragment packet, and let \(S(m)\) be its true sender. Let \(T\) denote the adversary's network-layer trace. The trace may include packet times, visible transport links, observations from compromised nodes, and all public system parameters. Let \(\Pi_T(m)\) be the set of sender-labeled trajectories for \(m\) that remain feasible after conditioning on \(T\). A trajectory \(\pi\in\Pi_T(m)\) specifies a candidate sender, a route, and the input-output choices through relay shuffles on that route. Let \(h(\pi)\) denote the number of hidden forwarding shuffles on \(\pi\).

For a candidate sender \(u\), define
\begin{equation}
    \Gamma_T(u,m)=\{\pi\in\Pi_T(m): \pi \text{ starts at } u\}.
\end{equation}
Let
\begin{equation}
    \gamma_T=\max_{u\in\mathcal{U}_T}|\Gamma_T(u,m)|,
\end{equation}
where \(\mathcal{U}_T\) is the set of candidate senders consistent with \(T\). The conditional sender-identification probability is
\begin{equation}
    p_{\mathrm{link}}(m\mid T)
    =\max_{u\in\mathcal{U}_T}\Pr[S(m)=u\mid T].
\end{equation}
The advantage \(\mathbb{E}_T[p_{\mathrm{link}}(m\mid T)-1/|\mathcal{U}_T|]\) is non-negative since, for each trace, the maximum posterior over \(\mathcal{U}_T\) is at least \(1/|\mathcal{U}_T|\). The conditional trajectory min-entropy is
\begin{equation}
H_{\infty}(\Pi_m \mid T)
= -\log_2 \max_{\pi\in\Pi_T(m)}\Pr[\Pi_m=\pi\mid T].
\end{equation}

Table~\ref{tab:supp-notation} summarizes the notation used across the proofs.

\begin{table}[H]
\centering
\caption{Notation used in the supplementary proofs.}
\label{tab:supp-notation}
\small
\renewcommand{\arraystretch}{1.08}
\begin{tabularx}{\linewidth}{@{}p{0.23\linewidth}X@{}}
\toprule
\textbf{Symbol} & \textbf{Meaning} \\
\midrule
\multicolumn{2}{@{}l}{\textit{Network-layer unlinkability}} \\
\(m\) & Target model-fragment packet. \\
\(S(m)\) & True sender of packet \(m\). \\
\(T\) & Adversary's network-layer trace, including timing, visible links, compromised relay views, and public parameters. \\
\(\pi\), \(\Pi_T(m)\) & A sender-labeled trajectory and the set of trajectories for \(m\) that remain feasible after conditioning on \(T\). \\
\(u\), \(\mathcal{U}_T\) & Candidate sender and the trace-consistent candidate sender set. \\
\(\Gamma_T(u,m)\) & Feasible trajectories that attribute packet \(m\) to candidate sender \(u\). \\
\(\gamma_T\) & Maximum number of feasible trajectories assigned to any one candidate sender. \\
\(R_K\), \(K\) & Random route sampled under maximum path length \(K\). \\
\(h\), \(O\) & Minimum number of hidden forwarding shuffles over feasible trajectories and configured outbox size. \\
\(\Pi_f\), \(\Pi_a\) & Forward fragment trajectory and acknowledgment trajectory encoded in the SURB. \\
\(T_0\), \(T_a\) & Trace before acknowledgment emission and visible acknowledgment trace. \\
\midrule
\multicolumn{2}{@{}l}{\textit{Learning preservation}} \\
\(F\), \(F_k\) & Global objective and local objective at peer \(k\). \\
\(w_t\), \(\tilde w_t\) & Model state before aggregation and fragment-based aggregate at round \(t\). \\
\(L\), \(\eta\), \(\tau\) & Smoothness constant, SGD step size, and number of local SGD steps. \\
\(\sigma_g^2\), \(\nu_q\) & Stochastic-gradient variance bound and quantization variance bound. \\
\(Q\) & Unbiased stochastic quantizer applied before lossless compression. \\
\(P_t\) & Fragment pool available at a node in round \(t\). \\
\(c_{t,i}\), \(p_{t,i}\), \(\bar p\) & Number of fragments covering index \(i\), its coverage probability, and the uniform lower bound on coverage. \\
\(\bar K_{\mathrm{eff}}\) & Minimum expected number of contributors among covered indices. \\
\(\Delta_{\mathrm{fb}}\) & Bounded optimization error introduced by fallback on uncovered indices. \\
\midrule
\multicolumn{2}{@{}l}{\textit{Fragment-level robustness}} \\
\(Y_r\), \(n_r\), \(B_r\) & Candidate slices for index range \(r\), number of candidates, and upper bound on Byzantine slices. \\
\(q_r\), \(N_y\), \(s_y\) & Number of nearest neighbors used by \texttt{FragKrum}, the neighbor set of candidate \(y\), and its Krum score. \\
\(H_r\), \(\delta_r\) & Honest slices for range \(r\) and their diameter bound. \\
\bottomrule
\end{tabularx}
\end{table}
\begin{theorem}[Network-layer sender--message unlinkability]
\label{supp:thm:network}
Consider a fragment packet \(m\) whose sender-labeled trajectories \(\Pi_T(m)\) are consistent with the adversary trace \(T\). Let \(h\) be the minimum number of hidden forwarding shuffles on any feasible trajectory. That is, \(h=\min_{\pi\in\Pi_T(m)}h(\pi)\). Assume that visible packet features are sender-independent and that each hidden forwarding shuffle uses a uniformly shuffled outbox of size \(O\). If the adversary uses only network-layer observations, its best sender-linking probability is bounded by
\begin{equation}
\label{supp:eq:sender-link-bound}
p_{\mathrm{link}}(m \mid T)
\le
\min\left\{1,\gamma_T 2^{-H_{\infty}(\Pi_m\mid T)}\right\}.
\end{equation}
Moreover, if the hidden shuffle choices remain unrevealed except through their feasible outbox positions,
\begin{equation}
\label{supp:eq:path-min-entropy}
H_{\infty}(\Pi_m\mid T)
\ge
H_{\infty}(R_K\mid T)+h\log_2 O,
\end{equation}
where \(R_K\) denotes the random route sampled under maximum path length \(K\), and \(H_{\infty}(R_K\mid T)\) is the remaining route uncertainty after observing \(T\). Combining Eqs.~\ref{supp:eq:sender-link-bound} and~\ref{supp:eq:path-min-entropy} gives
\begin{equation}
\label{supp:eq:explicit-ko-bound}
p_{\mathrm{link}}(m \mid T)
\le
\min\left\{1,
\gamma_T 2^{-H_{\infty}(R_K\mid T)} O^{-h}\right\},
\end{equation}
where \(h\le K\). The relay-shuffle contribution is at most \(K\log_2 O\), reached only when the least-hidden feasible trajectory still hides all \(K\) forwarding stages.
\end{theorem}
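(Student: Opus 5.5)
The plan has three steps: marginalize the sender posterior over trajectories, decompose each trajectory's posterior into a route term and a shuffle term, and substitute one bound into the other.

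\emph{Step 1: the sender bound \eqref{supp:eq:sender-link-bound}.} Each feasible trajectory names exactly one sender. The events \(\{\Pi_m=\pi\}\) for \(\pi\in\Pi_T(m)\) therefore partition the conditional probability space given \(T\). For every candidate \(u\in\mathcal{U}_T\) this gives
\[
\Pr[S(m)=u\mid T]=\sum_{\pi\in\Gamma_T(u,m)}\Pr[\Pi_m=\pi\mid T]\le |\Gamma_T(u,m)|\max_{\pi\in\Pi_T(m)}\Pr[\Pi_m=\pi\mid T].
\]
By the definition of trajectory min-entropy, the maximum equals \(2^{-H_{\infty}(\Pi_m\mid T)}\). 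Maximizing over \(u\) replaces \(|\Gamma_T(u,m)|\) by \(\gamma_T\). The cap at \(1\) holds trivially because the left side is a probability. The assumption of sender-independent visible features is used only to justify that \(T\) influences the sender posterior solely through which trajectories remain feasible. No packet type, size, or acknowledgment direction adds further likelihood weight.

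\emph{Step 2: the min-entropy lower bound \eqref{supp:eq:path-min-entropy}.} Write a trajectory as \(\pi=(r,\xi)\), where \(r\) is the route (sender included) and \(\xi\in[O]^{h(r)}\) lists the input--output positions at the hidden shuffles. The chain rule gives
\[
\Pr[\Pi_m=\pi\mid T]=\Pr[R_K=r\mid T]\cdot\Pr[\xi\mid R_K=r,T].
\]
Bounding the first factor by its maximum gives \(2^{-H_{\infty}(R_K\mid T)}\). For the second factor, each hidden shuffle uses a fresh uniform permutation, drawn independently of the routes and of the other relays' permutations. The assumption that the shuffle choices are revealed only through feasible outbox positions means that conditioning on \(T\) leaves each hidden stage uniform over its \(O\) slots. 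The second factor is therefore \(O^{-h(r)}\le O^{-h}\), since \(h\) is the minimum of \(h(\pi)\) over feasible trajectories. Taking \(-\log_2\) of the maximum over \(\pi\) yields \eqref{supp:eq:path-min-entropy}.

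Inserting this into Step~1 gives \eqref{supp:eq:explicit-ko-bound}. The claim \(h\le K\) holds because a route of length at most \(K\) crosses at most \(K\) forwarding stages. Hence \(h\log_2 O\le K\log_2 O\), with equality only if every feasible trajectory, the least-hidden one included, hides all \(K\) stages.

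\emph{Main obstacle.} The hard step is justifying that each hidden stage remains uniform over \(O\) after conditioning on the whole trace. The permutation at a relay is shared by all packets in that outbox, so the stages are not literally independent once other packets' trajectories are observed. Moreover, the adversary's timing view could narrow the feasible slots. The approach is to argue directly from the stated hypothesis. Conditioning on \(T\) fixes the multiset of visible emissions. The uniform permutation together with cover padding to size \(O\) makes all \(O\) slot assignments equally consistent with that multiset. Exchangeability of a uniform permutation then gives marginal probability \(1/O\) for the target packet's slot at each hidden relay. For distinct relays, independence of their permutations gives the product form. If this exchangeability argument fails, one can fall back to the weaker inequality \(\Pr[\xi\mid r,T]\le O^{-h}\) stated as part of the assumption. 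I would also remark, as the paper does, that nothing beyond this idealization is claimed.
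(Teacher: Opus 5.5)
Your proposal is correct and follows the paper's proof essentially step for step. The paper also sums trajectory posteriors over $\Gamma_T(u,m)$, bounds each term by $2^{-H_{\infty}(\Pi_m\mid T)}$, and then factors a trajectory as $(U,R,C_1,\ldots,C_{h_\pi})$, bounding the route factor by $\Pr[U=u,R=r\mid T]\le\Pr[R=r\mid T]\le 2^{-H_{\infty}(R_K\mid T)}$ and each hidden-shuffle conditional by $1/O$. The differences are cosmetic: you fold the sender into the route, and you sketch an exchangeability justification of the per-stage factor, whereas the paper simply reads the inequality $\le 1/O$ off the hypothesis, which is exactly your fallback.
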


\begin{proof}
The proof has two parts. The first part converts trajectory min-entropy into a sender-linking bound. The second part lower-bounds the trajectory min-entropy contributed by route choice and hidden relay shuffles.

Fix a candidate sender \(u\in\mathcal{U}_T\). The posterior probability that \(u\) is the sender equals the sum of the posterior probabilities of all feasible trajectories that start at \(u\):
\begin{equation}
\Pr[S(m)=u\mid T]
=\sum_{\pi\in\Gamma_T(u,m)}\Pr[\Pi_m=\pi\mid T].
\end{equation}
By the definition of min-entropy, every feasible trajectory satisfies
\begin{equation}
\Pr[\Pi_m=\pi\mid T]\le 2^{-H_{\infty}(\Pi_m\mid T)}.
\end{equation}
Therefore,
\begin{equation}
\Pr[S(m)=u\mid T]
\le
|\Gamma_T(u,m)|2^{-H_{\infty}(\Pi_m\mid T)}
\le
\gamma_T2^{-H_{\infty}(\Pi_m\mid T)}.
\end{equation}
The adversary's optimal sender-linking strategy is to choose the sender with maximum posterior probability, so
\begin{equation}
p_{\mathrm{link}}(m\mid T)
=\max_{u\in\mathcal{U}_T}\Pr[S(m)=u\mid T]
\le
\gamma_T2^{-H_{\infty}(\Pi_m\mid T)}.
\end{equation}
Since a probability is at most one, Eq.~\ref{supp:eq:sender-link-bound} follows.

We now prove Eq.~\ref{supp:eq:path-min-entropy}. Let a feasible sender-labeled trajectory be represented as
\begin{equation}
    \Pi_m=(U,R,C_1,\ldots,C_{h_\pi}),
\end{equation}
where \(U\) is the sender label, \(R\) is the sampled overlay route, and \(C_\ell\) is the local input-output matching choice through the \(\ell\)-th hidden shuffle on that route. By definition, \(h_\pi\ge h\) for every feasible trajectory. Compromised relays and monitored links are already included in the conditioning trace \(T\), and therefore do not contribute hidden entropy. Only hidden shuffles contribute the \(C_\ell\) terms.

For any route value \(r\), the definition of route min-entropy gives
\begin{equation}
\Pr[R=r\mid T]\le 2^{-H_{\infty}(R_K\mid T)}.
\end{equation}
At each hidden relay shuffle, the relay draws an outbox of size \(O\) and applies a uniformly sampled permutation before emission. Conditioned on the packet being in that outbox and on all information in \(T\), the adversary cannot distinguish which of the \(O\) outgoing positions corresponds to the incoming packet. Hence
\begin{equation}
\Pr[C_\ell=c_\ell\mid U=u,R=r,T,C_1=c_1,\ldots,C_{\ell-1}=c_{\ell-1}]
\le \frac{1}{O}.
\end{equation}
Using \(\Pr[U=u,R=r\mid T]\le\Pr[R=r\mid T]\), for any feasible trajectory \(\pi=(u,r,c_1,\ldots,c_{h_\pi})\),
\begin{align}
\Pr[\Pi_m=\pi\mid T]
&=\Pr[U=u,R=r\mid T]
  \prod_{\ell=1}^{h_\pi}
  \Pr[C_\ell=c_\ell\mid U=u,R=r,T,C_{<\ell}=c_{<\ell}] \\
&\le
\Pr[R=r\mid T] O^{-h_\pi} \\
&\le
2^{-H_{\infty}(R_K\mid T)} O^{-h}.
\end{align}
Taking the maximum over feasible trajectories and applying \(-\log_2(\cdot)\) gives
\begin{equation}
H_{\infty}(\Pi_m\mid T)
\ge H_{\infty}(R_K\mid T)+h\log_2 O.
\end{equation}
This proves Eq.~\ref{supp:eq:path-min-entropy}. Substituting it into Eq.~\ref{supp:eq:sender-link-bound} gives Eq.~\ref{supp:eq:explicit-ko-bound}. Since a route has at most \(K\) forwarding stages, \(h\le K\), so the relay-shuffle contribution is at most \(K\log_2 O\). This maximum is reached only when the least-hidden feasible trajectory still hides all \(K\) forwarding stages. This proves the theorem.
\end{proof}

\subsubsection{Compromised-Relay Regime}

The compromised-relay corollary in the main paper is an idealized calculation for interpreting the effect of partial relay compromise. If each forwarding stage samples a relay independently and uniformly from a population where a fraction \(c\) is compromised, a length-\(K\) trajectory has all forwarding stages compromised with probability \(c^K\). With probability \(1-c^K\), at least one forwarding shuffle remains hidden from the adversary, and the sender-linking bound keeps the additional \(O^{-1}\) relay-shuffle factor from Theorem~\ref{supp:thm:network}. The expected number of hidden shuffles under the same sampling model is \(K(1-c)\).

For topology-constrained deployments, the same reasoning applies after replacing \(c^K\) with the actual probability that every forwarding stage on the sampled route is compromised. This is why the main paper states the corollary as a regime calculation rather than as a topology-independent guarantee.
\begin{remark}[Role of packet indistinguishability]
The proof assumes that the adversary cannot separate real fragments, cover packets, relay packets, and acknowledgments by visible packet features. In \textit{UnlinkableDFL}, this follows from fixed-size packet formatting, padding, and onion encryption. If packet classes became externally distinguishable, the feasible trajectory set \(\Pi_T(m)\) would shrink and the bound would weaken accordingly.
\end{remark}

\subsection{Acknowledgment Independence}
\label{supp:surb}

\begin{lemma}[SURB acknowledgment independence]
\label{supp:lemma:surb}
Let \(\Pi_f\) be the forward trajectory of a fragment and let \(\Pi_a\) be the acknowledgment trajectory encoded in its Single-Use Reply Block (SURB). Let \(T_0\) include the trace before acknowledgment emission and the event that the destination generated an acknowledgment. Suppose that \(\Pi_a\) is sampled independently of \(\Pi_f\) conditioned on \(T_0\), that the destination uses the SURB without learning the sender address, and that the acknowledgment packet has the same externally visible packet format as other mixnet packets. Then observing the visible acknowledgment trace does not reduce the posterior uncertainty of the forward trajectory except through information already present in \(T_0\).
\end{lemma}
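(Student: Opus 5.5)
The plan is to formalize the claim as a conditional-independence statement and discharge it with Bayes' rule. Write the full observed trace as \(T=(T_0,T_a)\), where \(T_a\) is the visible acknowledgment trace. The goal is
\begin{equation}
\Pr[\Pi_f=\pi\mid T_0,T_a]=\Pr[\Pi_f=\pi\mid T_0]
\end{equation}
for every feasible \(\pi\). It then follows that \(H_{\infty}(\Pi_f\mid T_0,T_a)=H_{\infty}(\Pi_f\mid T_0)\), and the bound of Theorem~\ref{supp:thm:network}, applied with the conditioning trace \(T_0\), is unaffected by acknowledgments.

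The first step models \(T_a\) as a randomized function of the acknowledgment trajectory and of the relays' fresh local randomness. Concretely,
\begin{equation}
T_a=g(\Pi_a,\xi_a,Z),
\end{equation}
where \(\xi_a\) collects the outbox permutations and delay samples along the return path, and \(Z\) collects the other packet events visible in the same period. The hypotheses are used one at a time.
\begin{itemize}
\item The identical external format and padding mean \(g\) does not depend on any payload field that encodes \(\Pi_f\). The destination's use of the SURB does not reveal the sender address, so the only sender-dependent content of the acknowledgment is the route already sealed inside the SURB, which is \(\Pi_a\) itself.
\item The relays' local randomness \(\xi_a\) is fresh and independent of the forward choices.
\end{itemize}
Together these give \(\Pr[T_a\mid \Pi_a,\Pi_f,T_0]=\Pr[T_a\mid \Pi_a,T_0]\).

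The second step marginalizes over \(\Pi_a\):
\begin{equation}
\Pr[T_a\mid \Pi_f,T_0]=\sum_{a}\Pr[T_a\mid \Pi_a=a,T_0]\Pr[\Pi_a=a\mid \Pi_f,T_0].
\end{equation}
The independence hypothesis replaces \(\Pr[\Pi_a=a\mid\Pi_f,T_0]\) with \(\Pr[\Pi_a=a\mid T_0]\). The likelihood \(\Pr[T_a\mid\Pi_f=\pi,T_0]\) is then constant in \(\pi\). By Bayes' rule, the posterior on \(\Pi_f\) given \((T_0,T_a)\) is proportional to \(\Pr[\Pi_f=\pi\mid T_0]\) times that constant, which proves the claim. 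The phrase ``except through information already present in \(T_0\)'' is handled because the event that an acknowledgment was generated is included in \(T_0\).

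The main obstacle is the first step, namely justifying that \(T_a\) carries no forward-trajectory information through shared resources. Acknowledgment packets pass through the same relay outboxes as forward traffic, so their emission times and the outbox occupancy are in principle correlated with \(\Pi_f\). A relay on both paths is one example, and the acknowledgment's start time at the destination is another. I would handle this by absorbing into \(T_0\) everything that is determined before or at acknowledgment emission, including the destination and the emission time. I would then argue that after emission, return-path hops draw fresh uniform permutations and delays that are independent of the forward path. Any residual timing coupling is assigned to the same idealization, sender-independent visible features, already assumed in Theorem~\ref{supp:thm:network}, and I would state explicitly that the lemma inherits that assumption.
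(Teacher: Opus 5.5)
Your proposal is correct and follows the paper's proof. Both establish that \(T_a\) is conditionally independent of \(\Pi_f\) given \(T_0\), because \(T_a\) depends only on \(\Pi_a\) and fresh relay randomness while \(\Pi_a\) is independent of \(\Pi_f\) given \(T_0\), and then conclude equal posteriors and hence \(H_{\infty}(\Pi_f\mid T_0,T_a)=H_{\infty}(\Pi_f\mid T_0)\). Your version spells out the marginalization over \(\Pi_a\) and the Bayes step, and it explicitly flags the shared-outbox timing coupling as an inherited idealization; the paper leaves that coupling implicit in the phrase ``a function of \(\Pi_a\) and trace-visible randomness.''
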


\begin{proof}
Let \(T_a\) denote the visible acknowledgment trace. By construction, the SURB contains encrypted routing instructions for the return path and does not reveal the sender address to the destination. The return trajectory \(\Pi_a\) is sampled independently of the forward trajectory \(\Pi_f\) conditioned on \(T_0\). Since acknowledgment packets are formatted and routed as ordinary mixnet packets, \(T_a\) is a function of \(\Pi_a\) and trace-visible randomness, but not of \(\Pi_f\) beyond information already included in \(T_0\). Thus \(T_a\) is conditionally independent of \(\Pi_f\) given \(T_0\), and therefore
\begin{equation}
    \Pr[\Pi_f=\pi_f\mid T_0,T_a]
    =\Pr[\Pi_f=\pi_f\mid T_0].
\end{equation}
Consequently,
\begin{equation}
    H_{\infty}(\Pi_f\mid T_0,T_a)
    =H_{\infty}(\Pi_f\mid T_0).
\end{equation}
The acknowledgment trace creates a separate inference problem for \(\Pi_a\), but it does not collapse the forward path to a direct destination-to-sender relation.
\end{proof}

\subsection{Learning Preservation of \texttt{FragFedAvg}}
\label{supp:fragfedavg}

\subsubsection{Assumptions}

Let \(F(w)=\frac{1}{N}\sum_{k=1}^N F_k(w)\). We use the following standard assumptions for FedAvg-style non-convex analysis.

\begin{assumption}[Smoothness and lower boundedness]
Each local objective \(F_k\) is \(L\)-smooth, and the global objective \(F\) is bounded below by \(F^\star\).
\end{assumption}

\begin{assumption}[Stochastic gradients]
Each stochastic gradient is unbiased and has bounded variance \(\sigma_g^2\):
\begin{equation}
    \mathbb{E}[g_k(w)]=\nabla F_k(w),
    \qquad
    \mathbb{E}\|g_k(w)-\nabla F_k(w)\|^2\le \sigma_g^2.
\end{equation}
\end{assumption}

\begin{assumption}[Base FedAvg descent rate]
\label{supp:asm:base-rate}
The usual FedAvg local-drift terms are bounded under \(\tau\) local SGD steps and step size \(\eta\). Concretely, we take as given the standard non-convex result for FedAvg/FedProx-style methods: when all participating model states are aggregated by index-wise averaging over an effective participating set of size \(K_{\mathrm{eff}}\), the iterates satisfy a per-round descent inequality of the form
\begin{equation}
\label{supp:eq:base-descent}
\mathbb{E}[F(\bar w_{t+1})]
\le
\mathbb{E}[F(\bar w_t)]
-c\eta\,\mathbb{E}\|\nabla F(\bar w_t)\|^2
+C\eta^2 L\sigma_g^2
+\frac{C\eta}{K_{\mathrm{eff}}\tau},
\end{equation}
for universal constants \(c,C>0\) and \(\eta\le 1/(2L)\)~\cite{mcmahan2017communication,karimireddy2020scaffold}. Our analysis treats \texttt{FragFedAvg} as a perturbation of this baseline rather than re-deriving it.
\end{assumption}

\begin{assumption}[Unbiased quantization]
The fragment quantizer \(Q\) is unbiased and has bounded variance \(\nu_q\):
\begin{equation}
    \mathbb{E}[Q(x)\mid x]=x,
    \qquad
    \mathbb{E}\|Q(x)-x\|^2\le \nu_q.
\end{equation}
The compression stage is lossless.
\end{assumption}

\begin{assumption}[Coverage and fallback]
For parameter index \(i\) in round \(t\), let \(c_{t,i}\) be the number of available fragments covering \(i\), and let \(p_{t,i}=\Pr(c_{t,i}\ge 1)\). Define \(\bar p=\min_i\inf_t p_{t,i}\). Among covered indices, the expected number of contributors is at least \(\bar K_{\mathrm{eff}}\). When an index is uncovered and the algorithm falls back to the pre-aggregation local value, the induced optimization error is bounded by \(\Delta_{\mathrm{fb}}\).
\end{assumption}

\begin{theorem}[Learning preservation of \texttt{FragFedAvg}]
\label{supp:thm:fragfedavg}
Under the assumptions above and step size \(\eta\le 1/(2L)\), the fragment-based iterate \(\tilde w_t\) (with \(w_{t+1}:=\tilde w_t\)) satisfies the bound
\begin{equation}
\label{supp:eq:frag-rate}
\begin{split}
    \frac{1}{T}\sum_{t=0}^{T-1}
    \mathbb{E}\|\nabla F(\tilde w_t)\|^2
    \le
    O\Big(
        \frac{1}{\eta T}
        + \eta L(\sigma_g^2+\nu_q)
        + \frac{1}{\bar p\,\bar K_{\mathrm{eff}}\tau}
        {} \\
        \qquad
        + (1-\bar p)\Delta_{\mathrm{fb}}
    \Big).
\end{split}
\end{equation}
The first three terms vanish as \(T\to\infty\) and \(\eta\to0\). The last term is a coverage-dependent bias floor that vanishes only as \(\bar p\to1\). The statement is therefore a preservation result: \texttt{FragFedAvg} matches the FedAvg stationarity rate up to a floor controlled by fragment coverage and quantization, and reduces exactly to FedAvg when \(\bar p=1\) and \(\nu_q=0\).
\end{theorem}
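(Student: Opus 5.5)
The plan is to treat \texttt{FragFedAvg} as a perturbation of the FedAvg descent inequality in Assumption~\ref{supp:asm:base-rate}. We write the fragment aggregate as the ideal FedAvg average plus an error vector, bound each error component in expectation, and then telescope. Concretely, for each index \(i\) we decompose
\begin{equation*}
\tilde w_{t,i}-\bar w_{t,i}
=\underbrace{\mathbf 1[c_{t,i}\ge1]\,(\hat w_{t,i}-\bar w_{t,i})}_{\text{sampling}}
+\underbrace{\mathbf 1[c_{t,i}\ge1]\,\xi_{t,i}}_{\text{quantization}}
+\underbrace{\mathbf 1[c_{t,i}=0]\,(w_{t,i}-\bar w_{t,i})}_{\text{fallback}}.
\end{equation*}
Here \(\hat w_{t,i}\) is the unquantized average over the \(c_{t,i}\) covering fragments, and \(\xi_{t,i}\) is the averaged quantization noise.

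First, we condition on the coverage pattern and use unbiasedness of \(Q\). The quantization term then has zero mean, and its second moment is at most \(\nu_q\) (up to the \(1/c_{t,i}\) averaging factor, which we simply drop). Substituting into the \(L\)-smooth expansion
\begin{equation*}
F(\tilde w_t)\le F(\bar w_t)+\langle\nabla F(\bar w_t),\tilde w_t-\bar w_t\rangle+\tfrac L2\|\tilde w_t-\bar w_t\|^2,
\end{equation*}
the linear quantization term vanishes in expectation. The quadratic term contributes \(O(\eta^2L\nu_q)\) per round, once we interpret \(\nu_q\) as scaled to the update magnitude \(\eta\). This is the usual convention for quantizing updates rather than raw weights, and we state it explicitly.

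Second, we treat the sampling term as FedAvg with a random participating subset. Covered indices average over an effective set whose expected size is at least \(\bar K_{\mathrm{eff}}\), and this happens with probability at least \(\bar p\). Invoking Assumption~\ref{supp:asm:base-rate} with \(K_{\mathrm{eff}}\) replaced by the expected effective participation \(\bar p\,\bar K_{\mathrm{eff}}\) then yields the drift term \(C\eta/(\bar p\bar K_{\mathrm{eff}}\tau)\). To justify this, we take a conditional expectation over which indices are covered, assuming that fragment arrival is independent of the values, which is what the mixnet's randomized routing provides. Third, the fallback term occurs with probability at most \(1-\bar p\). By the coverage-and-fallback assumption, its contribution to the per-round objective is bounded by \(\eta(1-\bar p)\Delta_{\mathrm{fb}}\). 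We treat \(\Delta_{\mathrm{fb}}\) as absorbing both the inner-product and the quadratic parts, reading the assumption as a bound on the induced optimization error.

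Combining the three pieces gives the per-round inequality
\begin{equation*}
\mathbb E F(w_{t+1})\le\mathbb E F(w_t)-c\eta\,\mathbb E\|\nabla F(w_t)\|^2+C\eta^2L(\sigma_g^2+\nu_q)+\frac{C\eta}{\bar p\bar K_{\mathrm{eff}}\tau}+C\eta(1-\bar p)\Delta_{\mathrm{fb}}.
\end{equation*}
We sum this over \(t=0,\dots,T-1\), use \(F\ge F^\star\), and divide by \(c\eta T\) to obtain Eq.~\ref{supp:eq:frag-rate}. The \(1/(\eta T)\) term comes from \(F(w_0)-F^\star\).

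To state the bound at \(\tilde w_t\) rather than \(w_t\), we use \(w_{t+1}:=\tilde w_t\), which shifts the index by one round. This shift costs only an \(O(1/(\eta T))\) boundary term, since \(\|\nabla F\|^2\) at the final iterate can be absorbed with smoothness. The reduction to FedAvg is then immediate: when \(\bar p=1\) and \(\nu_q=0\), the quantization and fallback terms vanish identically and \(\tilde w_t=\bar w_t\).

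The main obstacle is the second step, which justifies replacing \(K_{\mathrm{eff}}\) by \(\bar p\,\bar K_{\mathrm{eff}}\). Different indices average over different random subsets, so \(\tilde w_t\) is not a FedAvg iterate over any single participating set. The plan is to apply the base descent inequality coordinate-wise in its variance form. In that form, the \(1/(K\tau)\) term arises from the variance of an average of \(K\) independent local drifts, so coordinate-wise averaging over \(c_{t,i}\) contributors yields an expected variance of \(\mathbb E[1/c_{t,i}\mid c_{t,i}\ge1]\). Bounding this quantity by \(1/\bar K_{\mathrm{eff}}\) requires a Jensen-type or concentration argument, because \(\mathbb E[1/c]\ge1/\mathbb E[c]\) points the wrong way. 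If the inequality fails, we would instead assume a lower bound on \(c_{t,i}\) given coverage, or define \(\bar K_{\mathrm{eff}}\) as the harmonic-mean contributor count. The residual factor \(1/\bar p\) would then be a conservative loosening.
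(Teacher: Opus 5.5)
Your proposal follows essentially the same route as the paper's proof: treat \texttt{FragFedAvg} as a perturbation of the base FedAvg descent inequality by coverage, quantization, and fallback errors, bound each in expectation, and telescope. The one difference is in handling the gradient: the paper passes from $\nabla F(\bar w_t)$ to $\nabla F(\tilde w_t)$ via the $L$-Lipschitz gradient bridge and Young's inequality, whereas you keep the gradient at $w_t$ and shift the index using $w_{t+1}:=\tilde w_t$, which is harmless since summing over $t=1,\dots,T$ costs only a constant factor.

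The two caveats you raise are genuine, and the paper's proof does not resolve them either; it simply asserts both terms:
\begin{itemize}
\item It asserts that quantization noise enters as $\eta L\nu_q$. Under the literal absolute bound $\mathbb{E}\|Q(x)-x\|^2\le\nu_q$, this needs exactly your $\eta$-scaling convention.
\item It asserts that coverage yields $1/(\bar p\,\bar K_{\mathrm{eff}}\tau)$. But $\mathbb{E}[1/c_{t,i}]\ge 1/\mathbb{E}[c_{t,i}]$ points the wrong way for the stated definition of $\bar K_{\mathrm{eff}}$.
\end{itemize}
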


\begin{proof}
The proof is a reduction: we treat \texttt{FragFedAvg} as the FedAvg baseline of Assumption~\ref{supp:asm:base-rate} subject to three perturbations --- incomplete coverage, quantization noise, and fallback bias --- and propagate each through the base descent inequality~\eqref{supp:eq:base-descent}.

We fix the iterate convention \(w_{t+1}:=\tilde w_t\), i.e., each round replaces the local model with the fragment-based aggregate. Let \(\bar w_t\) denote the FedAvg-style aggregate that would be obtained from the same contributors if all their model states were available in unquantized form, and let \(\tilde w_t\) denote the \texttt{FragFedAvg} aggregate. For a covered index \(i\), \texttt{FragFedAvg} computes
\begin{equation}
    \tilde w_{t,i}
    =\frac{1}{c_{t,i}}\sum_{f\in P_t: i\in f}Q(f[i]).
\end{equation}
Because \(Q\) is unbiased,
\begin{equation}
    \mathbb{E}[\tilde w_{t,i}\mid P_t,c_{t,i}>0]
    =\frac{1}{c_{t,i}}\sum_{f\in P_t:i\in f}f[i].
\end{equation}
Thus quantization does not introduce bias on covered indices. Its conditional variance is bounded by
\begin{equation}
    \mathbb{E}\left[\left\|\tilde w_{t,i}
    -\frac{1}{c_{t,i}}\sum_{f\in P_t:i\in f}f[i]\right\|^2
    \middle| P_t,c_{t,i}>0\right]
    \le \frac{\nu_q}{c_{t,i}}.
\end{equation}
Taking expectation over covered indices and using the lower bound \(\bar K_{\mathrm{eff}}\) gives a quantization contribution of order \(\nu_q/\bar K_{\mathrm{eff}}\). Since \(\bar K_{\mathrm{eff}}\ge 1\), this is at most \(\nu_q\), which we use for a coverage-independent statement, so in the descent bound it appears with the same smoothness multiplier as stochastic-gradient variance, yielding the \(\eta L\nu_q\) term. A tighter \(\nu_q/\bar K_{\mathrm{eff}}\) form is available when coverage is high.

Coverage affects the number of contributors used for each index. Since each index is covered with probability at least \(\bar p\), and covered indices have at least \(\bar K_{\mathrm{eff}}\) expected contributors, the effective participation term is degraded from the usual FedAvg participation factor to
\begin{equation}
    \frac{1}{\bar p\,\bar K_{\mathrm{eff}}\tau}.
\end{equation}
This is the same role played by the participation term in standard FedAvg analyses, but with fragment coverage included.

When \(c_{t,i}=0\), \texttt{FragFedAvg} uses the pre-aggregation local value for that index. By assumption, the induced optimization error is bounded by \(\Delta_{\mathrm{fb}}\). Since this event has probability at most \(1-\bar p\), the fallback contribution is bounded by
\begin{equation}
    (1-\bar p)\Delta_{\mathrm{fb}}.
\end{equation}

It remains to combine these terms with the base descent inequality. Assumption~\ref{supp:asm:base-rate} provides the descent~\eqref{supp:eq:base-descent} for the idealized aggregate \(\bar w_t\). The \texttt{FragFedAvg} iterate differs from \(\bar w_t\) only through the three perturbations bounded above, so \(\mathbb{E}\|\tilde w_t-\bar w_t\|^2\) is controlled by the quantization variance (order \(\nu_q\)) and the fallback deviation (order \((1-\bar p)\Delta_{\mathrm{fb}}\)). Because \(F\) is \(L\)-smooth, the gradient measured at the realized iterate is related to the one in~\eqref{supp:eq:base-descent} by
\begin{equation}
\label{supp:eq:grad-bridge}
\|\nabla F(\bar w_t)-\nabla F(\tilde w_t)\|
\le L\,\|\bar w_t-\tilde w_t\|,
\end{equation}
so replacing \(\nabla F(\bar w_t)\) by \(\nabla F(\tilde w_t)\) costs only an additional term of the same order as the perturbations already accounted for (via Young's inequality on \(\|\nabla F(\bar w_t)\|^2\)). Substituting the local SGD update into~\eqref{supp:eq:base-descent}, taking expectation, using unbiased stochastic gradients and the local-drift bound, applying the bridge~\eqref{supp:eq:grad-bridge}, and then adding the quantization, coverage, and fallback perturbations derived above yields
\begin{equation}
\mathbb{E}[F(w_{t+1})]
\le
\mathbb{E}[F(w_t)]
-c\eta\mathbb{E}\|\nabla F(\tilde w_t)\|^2
+C\eta^2 L(\sigma_g^2+\nu_q)
+C\eta\left(\frac{1}{\bar p\bar K_{\mathrm{eff}}\tau}+(1-\bar p)\Delta_{\mathrm{fb}}\right),
\end{equation}
for universal constants \(c,C>0\) under \(\eta\le 1/(2L)\). Summing over \(t=0,\ldots,T-1\), telescoping the left-hand side, and using \(F(w_T)\ge F^\star\) gives
\begin{equation}
\frac{1}{T}\sum_{t=0}^{T-1}\mathbb{E}\|\nabla F(\tilde w_t)\|^2
\le
O\left(
\frac{F(w_0)-F^\star}{\eta T}
+\eta L(\sigma_g^2+\nu_q)
+\frac{1}{\bar p\bar K_{\mathrm{eff}}\tau}
+(1-\bar p)\Delta_{\mathrm{fb}}
\right).
\end{equation}
Absorbing \(F(w_0)-F^\star\) into the big-\(O\) notation proves Eq.~\ref{supp:eq:frag-rate}.
\end{proof}

\begin{remark}
When \(\bar p=1\) and \(\nu_q=0\), every index is covered and no quantization noise is introduced. In this case \texttt{FragFedAvg} reduces to FedAvg over the same participating model states, and the additional coverage, fallback, and quantization terms vanish.
\end{remark}

\subsection{Robustness of \texttt{FragKrum}}
\label{supp:fragkrum}

Let \(Y_r\) be the multiset of candidate fragment slices for index range \(r\). Let \(n_r=|Y_r|\), and suppose at most \(B_r\) slices in \(Y_r\) are Byzantine. For a candidate \(y\in Y_r\), \texttt{FragKrum} uses
\begin{equation}
    q_r=n_r-B_r-2
\end{equation}
and defines \(N_y\) as the \(q_r\) nearest slices to \(y\) in \(Y_r\setminus\{y\}\). Its score is
\begin{equation}
    s_y=\sum_{z\in N_y}\|y-z\|_2^2.
\end{equation}
The selected slice is \(y^\star=\arg\min_{y\in Y_r}s_y\).

\begin{theorem}[Fragment-level Krum robustness]
\label{supp:thm:fragkrum}
For an index range \(r\), assume \(Y_r\) contains \(n_r\) candidate slices, at most \(B_r\) of them are Byzantine, and \(n_r>2B_r+2\). Let \(H_r\subseteq Y_r\) be the honest slices, and assume the honest slices have diameter at most \(\delta_r\):
\begin{equation}
    \|y-z\|_2\le \delta_r
    \qquad
    \forall y,z\in H_r.
\end{equation}
Then \texttt{FragKrum} selects a slice \(y^\star\) satisfying
\begin{equation}
    s_{y^\star}
    \le (n_r-B_r-2)\delta_r^2.
\end{equation}
Moreover, \(y^\star\) is close to the honest cluster in the sense that
\begin{equation}
    \min_{h\in H_r}\|y^\star-h\|_2^2
    \le
    \frac{n_r-B_r-2}{n_r-2B_r-2}\delta_r^2.
\end{equation}
The rule therefore inherits the standard Krum requirement that each range have more than \(2B_r+2\) candidates, but it does not require sender identities.
\end{theorem}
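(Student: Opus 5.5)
The plan is to follow the standard Krum argument of~\cite{blanchard2017machine}. It runs entirely inside the multiset \(Y_r\) and uses only pairwise slice distances, so no sender labels enter at any point. Throughout, write \(q_r=n_r-B_r-2\). Since \(n_r>2B_r+2\), we have \(q_r\ge B_r+1\ge 1\), so every neighbor set \(N_y\) is well defined and nonempty. There are two steps: an upper bound on the score of every honest candidate, then a pigeonhole argument on the neighbor set of the selected slice.

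\emph{Step 1 (score bound).} Fix an honest slice \(y\in H_r\). Since \(|H_r|\ge n_r-B_r\), the set \(Y_r\setminus\{y\}\) contains at least \(n_r-B_r-1\ge q_r\) honest slices. Each of them is within distance \(\delta_r\) of \(y\) by the diameter assumption. Hence the \(q_r\) smallest distances from \(y\) to the other candidates are each at most \(\delta_r\), whichever slices realize them, honest or Byzantine, and ties do not matter. This gives \(s_y\le q_r\delta_r^2\) for every \(y\in H_r\). Because \(y^\star\) minimizes the score over all of \(Y_r\), and \(H_r\) is nonempty, we get \(s_{y^\star}\le\min_{y\in H_r}s_y\le q_r\delta_r^2\). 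This is the first claim. It also gives the main-text statement that the selected score is no larger than that of any honest candidate.

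\emph{Step 2 (closeness to the honest cluster).} If \(y^\star\) is itself honest, the bound is trivial because the minimum distance is \(0\). Otherwise, look at \(N_{y^\star}\). It has \(q_r\) elements, and at most \(B_r\) of them are Byzantine, so it contains at least \(q_r-B_r=n_r-2B_r-2\ge 1\) honest slices. Every term of \(s_{y^\star}\) is nonnegative. The sum of \(\|y^\star-h\|_2^2\) over these honest neighbors is therefore at most \(s_{y^\star}\le q_r\delta_r^2\). The smallest of these terms is at most their average, which yields
\[
\min_{h\in H_r}\|y^\star-h\|_2^2\le \frac{n_r-B_r-2}{n_r-2B_r-2}\,\delta_r^2 .
\]

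The only delicate point is bookkeeping rather than analysis. One must check that the two counting facts use the hypothesis \(n_r>2B_r+2\) in exactly the right places. Step~1 needs \(n_r-B_r-1\ge q_r\), which holds always. Step~2 needs \(q_r-B_r\ge 1\), which is exactly the hypothesis. One must also handle \(Y_r\) as a multiset, with duplicate slices and ties in the nearest-neighbor selection. I would treat duplicates as distinct elements indexed by position; every inequality above is then unaffected. Finally, I would remark that an uncovered range, or one with \(n_r\le 2B_r+2\), falls outside the theorem. In that case Algorithm~\ref{alg:fragagg} either falls back to the local value or loses the guarantee, matching the standard Krum limitation.
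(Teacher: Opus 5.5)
Your proof is correct and follows the paper's argument essentially step for step. You bound the score of any honest slice by $q_r\delta_r^2$ using its at least $q_r$ honest neighbors, then use the minimality of $y^\star$, a pigeonhole count of at least $q_r-B_r$ honest slices in $N_{y^\star}$, and min-below-average; your bookkeeping is slightly more careful than the paper's, which writes $|H_r|=n_r-B_r$ where only $|H_r|\ge n_r-B_r$ is given, and your case split on whether $y^\star$ is honest is harmless but unnecessary.
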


\begin{proof}
Let \(q_r=n_r-B_r-2\). Since \(n_r>2B_r+2\), we have \(q_r>B_r\) and \(q_r-B_r=n_r-2B_r-2>0\).

Consider any honest slice \(h\in H_r\). The number of honest slices other than \(h\) is
\begin{equation}
    |H_r|-1=(n_r-B_r)-1=n_r-B_r-1.
\end{equation}
Since \(q_r=n_r-B_r-2\), there are at least \(q_r\) honest neighbors available for \(h\). All honest slices are within distance \(\delta_r\) of one another, so the sum of squared distances from \(h\) to its \(q_r\) nearest neighbors is at most
\begin{equation}
    s_h\le q_r\delta_r^2.
\end{equation}
Because \(y^\star\) minimizes the Krum score over all candidates,
\begin{equation}
    s_{y^\star}\le s_h\le q_r\delta_r^2
    =(n_r-B_r-2)\delta_r^2.
\end{equation}
This proves the score bound.

Now inspect the \(q_r\) nearest neighbors of \(y^\star\). At most \(B_r\) candidates in all of \(Y_r\) are Byzantine, so at least
\begin{equation}
    q_r-B_r=n_r-2B_r-2
\end{equation}
honest slices appear among these nearest neighbors. The score \(s_{y^\star}\) is the sum of squared distances to all \(q_r\) nearest neighbors, including those honest ones. Therefore,
\begin{equation}
    \sum_{h\in N_{y^\star}\cap H_r}\|y^\star-h\|_2^2
    \le s_{y^\star}
    \le q_r\delta_r^2.
\end{equation}
Since there are at least \(q_r-B_r\) honest slices in \(N_{y^\star}\), the smallest squared distance from \(y^\star\) to an honest slice is at most the average over these honest neighbors:
\begin{equation}
    \min_{h\in H_r}\|y^\star-h\|_2^2
    \le
    \frac{q_r}{q_r-B_r}\delta_r^2
    =
    \frac{n_r-B_r-2}{n_r-2B_r-2}\delta_r^2.
\end{equation}
Thus the selected slice must lie near the honest cluster whenever the range-wise Krum condition holds. The computation uses only distances between slices in the same index range and never uses sender labels, which establishes the stated sender-independent robustness property.
\end{proof}

\begin{remark}[Partial coverage]
The theorem is range-local. If a range has too few candidate slices or too many Byzantine slices, the Krum condition fails for that range. This is why the main paper treats \texttt{FragKrum} as robust only when each covered range has enough honest candidates, and why Byzantine resilience is evaluated empirically under varying attack intensity.
\end{remark}

\FloatBarrier
\clearpage
\section{Implementation Details and Experiment Interface}
\label{supp:implementation}
\setcounter{figure}{0}
\setcounter{table}{0}
\renewcommand{\thefigure}{\thesection.\arabic{figure}}
\renewcommand{\thetable}{\thesection.\arabic{table}}

This appendix gives the implementation details needed to interpret the prototype and reproduce the experimental workflow in the main paper. The key distinction is between the decentralized training path and the experiment-management layer. During a run, each node performs local training, route sampling, mixing, acknowledgment handling, fragment storage, and aggregation by itself. The manager only prepares the nodes, injects configuration, schedules controlled scenarios, and collects logs.

\subsection{Configuration Surface}

The prototype uses a typed configuration object, \texttt{FullNodeConfig}, that is serialized by the experiment manager and injected into each node container at startup. This keeps node-side code fixed while allowing experiments to vary learning, topology, transport, payload, aggregation, and stressor parameters. Table~\ref{tab:supp_config_groups} lists the groups used by the experiments in Appendix~\ref{supp:experiments}.

\begin{table}[H]
\centering
\caption{Prototype configuration groups used to instantiate experiments.}
\label{tab:supp_config_groups}
\small
\begin{tabularx}{\linewidth}{@{}p{0.24\linewidth}X@{}}
\toprule
\textbf{Group} & \textbf{Configured values} \\
\midrule
Learning task & dataset, model architecture, local epochs, rounds, batch size, optimizer settings \\
Topology & node count, graph family, graph degree, join schedule, exit schedule \\
Mixnet transport & maximum path length \(K\), outbox size \(O\), delay parameters \(\mu,\sigma\), cover-packet policy \\
Payload handling & Sphinx body size, fragment size, quantization bit width, lossless compression, partial-update ratio \\
Aggregation and attacks & \texttt{FragFedAvg}, \texttt{FragKrum}, Byzantine budget, attack type, attack magnitude \\
Monitoring & communication, delivery, routing, learning, CPU, and memory metrics \\
\bottomrule
\end{tabularx}
\end{table}

\subsection{Experiment Orchestration Boundary}

The FastAPI manager provides repeatable deployment rather than a training coordinator. Before a run starts, it generates per-scenario Sphinx keys and Transport Layer Security (TLS) certificates, builds the topology, validates configuration constraints, and launches each node either as a Docker container or as a core-pinned OS process. During training, it does not aggregate models, choose mixnet routes, select fragments, or act as a registration authority. This boundary matches the clarification in the main paper: the learning and anonymous communication workflow is decentralized, while controlled experiments still need tooling for setup, observation, and teardown.

The manager also schedules controlled churn and attack events. For example, a late-join experiment starts one node after the initial rounds, and an exit experiment marks one node inactive after a specified round. These events are injected for reproducibility, but the response to them is handled by the nodes through local peer views, fresh route sampling, retransmission, fragment-pool aggregation, and fallback for uncovered parameter ranges.

\subsection{Dashboard and Logged Metrics}

The dashboard is included here to document what was configurable and what was measured, not as a privacy-critical component. Figure~\ref{fig:supp_ui_config} shows the configuration view used before launch. Figure~\ref{fig:supp_ui_indicators} shows pre-run indicators derived from the same scenario model, including topology diameter, reachable peers, fragments per model, estimated bandwidth, entropy diagnostics, and \texttt{FragKrum} feasibility. Figure~\ref{fig:supp_ui_monitoring} shows the live monitoring view used during a run. Metrics are streamed through Server-Sent Events (SSE) and exported to comma-separated value (CSV) logs after completion.

\begin{figure}[htbp]
\centering
\includegraphics[width=1\linewidth]{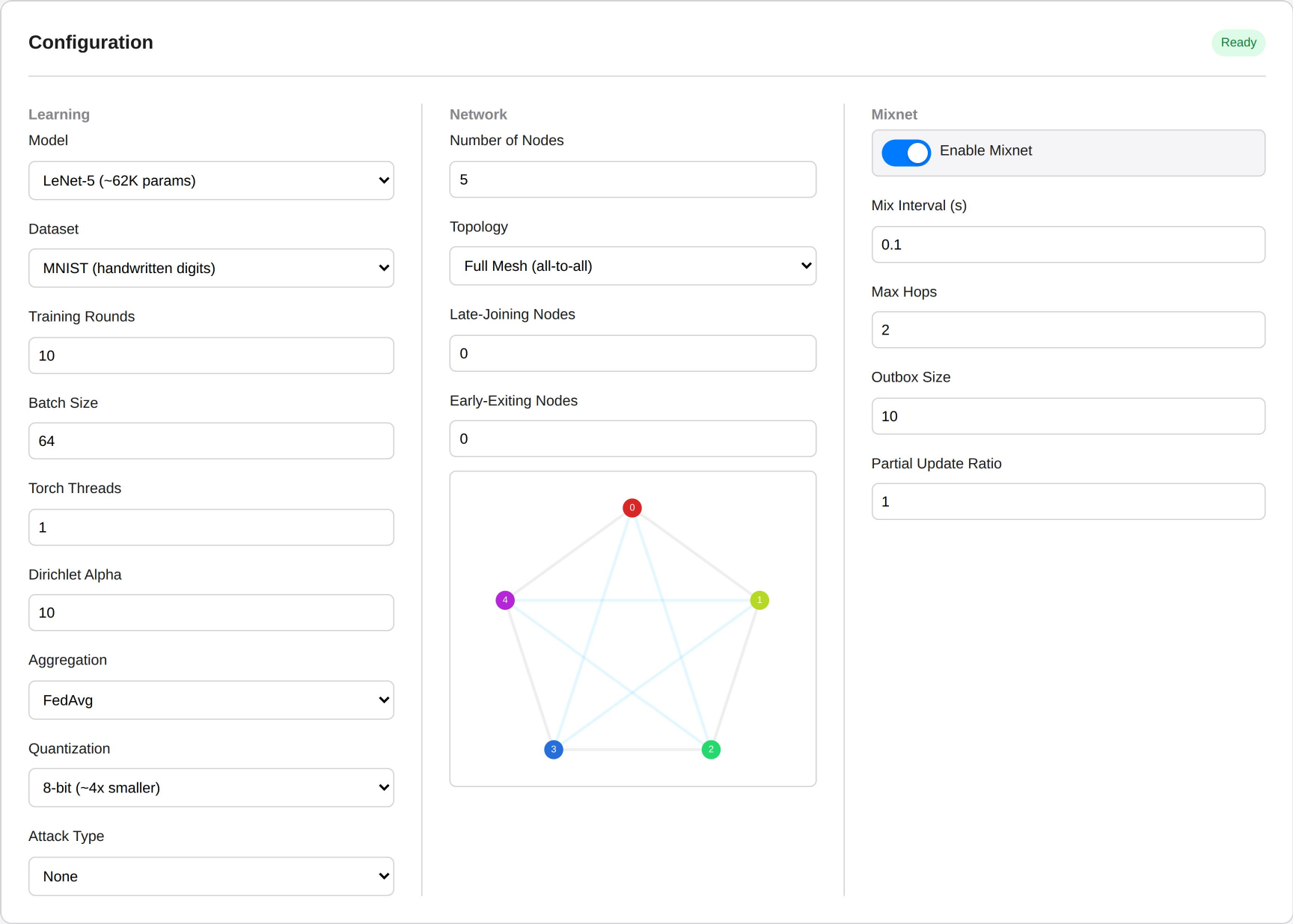}
\caption{Configuration interface for setting learning, topology, mixnet, payload, and stressor parameters before container launch.}
\label{fig:supp_ui_config}
\end{figure}

\begin{figure}[htbp]
\centering
\includegraphics[width=1\linewidth]{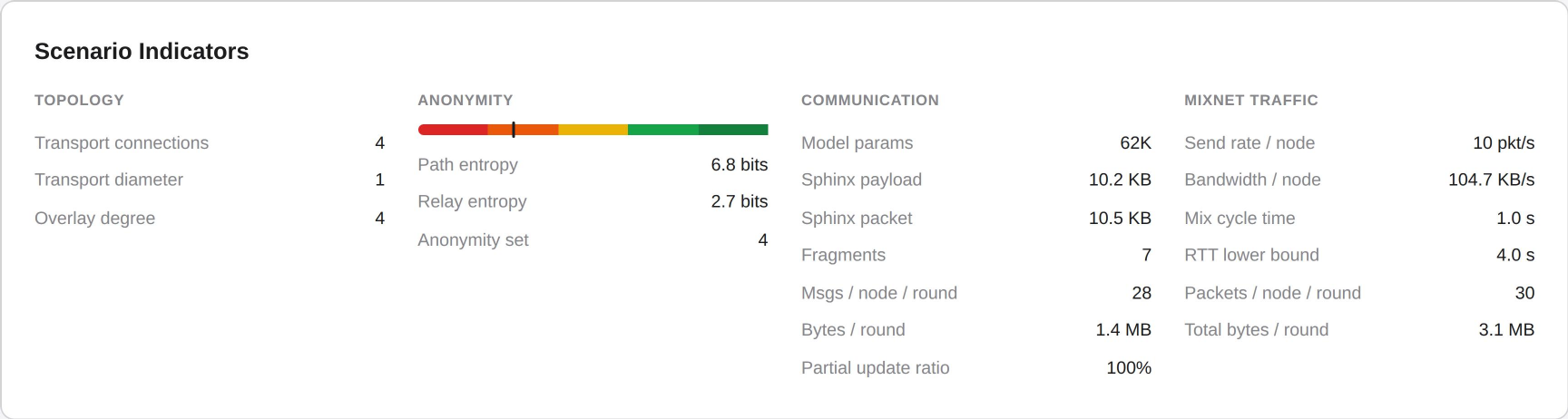}
\caption{Pre-run indicator panel. The values help validate whether a configuration has the intended reachability, payload volume, entropy diagnostics, and robustness budget before execution.}
\label{fig:supp_ui_indicators}
\end{figure}

\begin{figure}[htbp]
\centering
\includegraphics[width=1\linewidth]{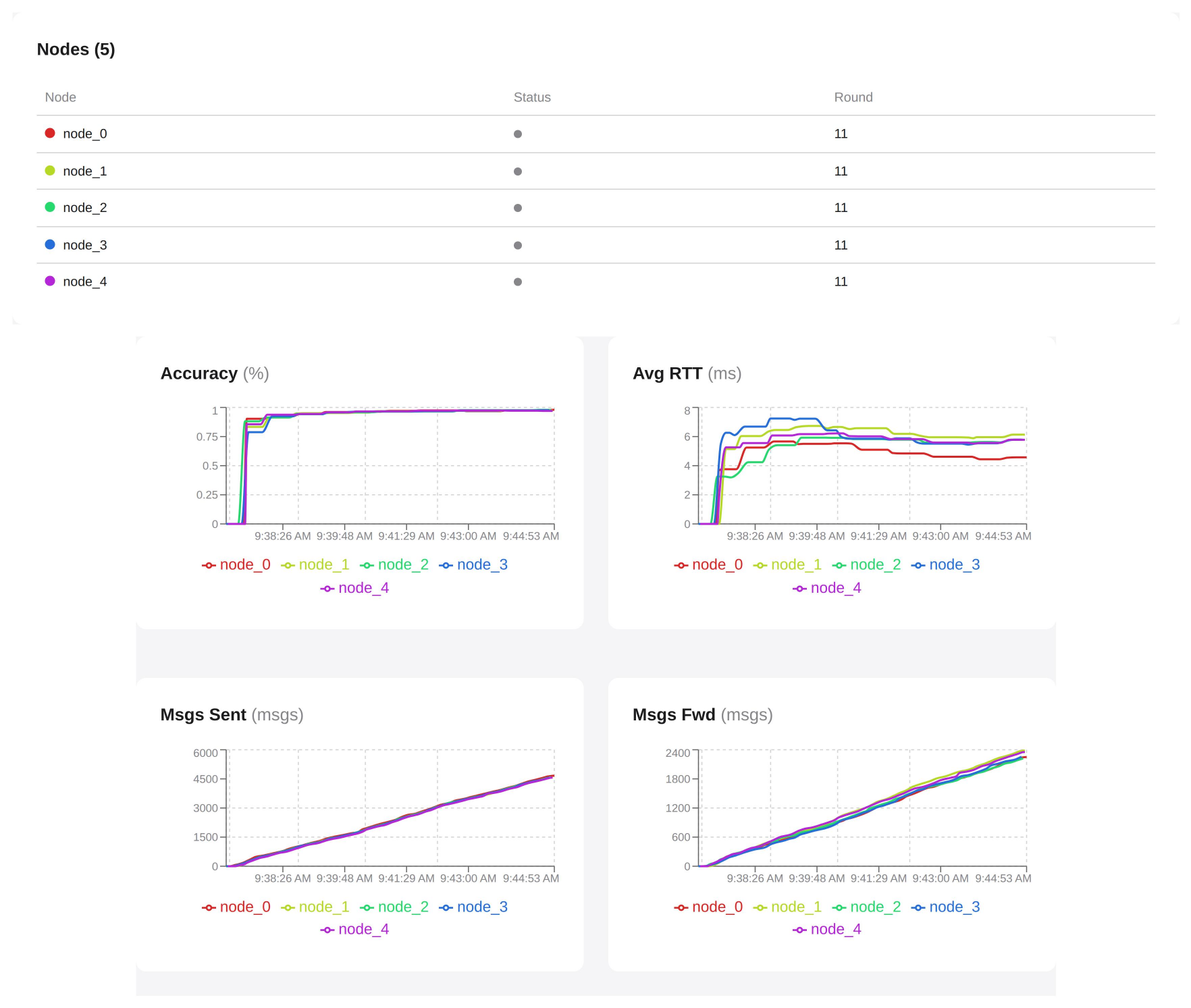}
\caption{Monitoring dashboard used during a run to inspect node state, fragment exchange, learning progress, and resource usage.}
\label{fig:supp_ui_monitoring}
\end{figure}
\FloatBarrier
\section{Supplementary Experiments}
\label{supp:experiments}
\setcounter{figure}{0}
\setcounter{table}{0}
\renewcommand{\thefigure}{\thesection.\arabic{figure}}
\renewcommand{\thetable}{\thesection.\arabic{table}}

This appendix reports the experiments that support the evaluation section of the main paper. The goal is not to introduce additional claims, but to document the settings, full curves, and secondary diagnostics behind the selected main-paper figures. Unless stated otherwise, experiments use the default prototype setting from the main paper: \(N=32\), degree-4 circulant topology, maximum path length \(K=2\), outbox size \(O=100\), delay mean \(\mu=0.1\)~s, delay standard deviation \(\sigma=0.001\), LeNet-5, ten training rounds, Dirichlet partitioning with \(\alpha=10\), 8-bit stochastic quantization with lossless compression, and \texttt{FragFedAvg}.

\begin{table}[H]
\centering
\caption{Supplementary experiment groups and their role in the main evaluation.}
\label{tab:supp_exp_settings}
\small
\begin{tabularx}{\linewidth}{@{}p{0.24\linewidth}X@{}}
\toprule
\textbf{Group} & \textbf{Contents and role} \\
\midrule
C.1 Learning/payload & Convergence by \(N\), comparison with Fedstellar, fragment counts, bytes per round, partial-update convergence, and data heterogeneity. \\
C.2 Attack/entropy/cost & Trace fields, adversary channels, attack procedure, entropy and topology diagnostics, mix-on/off overhead, cover-packet behavior, communication load, and \(K\)-dependent runtime. \\
C.3 Boundary/stress & Curious-recipient heatmaps, churn, and anonymous Byzantine fragments, including additional \texttt{FragKrum} supporting plots. \\
\bottomrule
\end{tabularx}
\end{table}

\subsection{Learning Utility and Payload Handling}
\label{supp:exp-utility-payload}

This group supports the learning-utility and payload-handling results in the main paper. The figures are organized from learning behavior to communication workload. Endpoint accuracy appears in the main text, while the supplementary curves show how those endpoints arise over rounds and how payload choices change the amount of meaningful model state exchanged.

Figure~\ref{fig:supp-scale-convergence} shows the full convergence traces behind the scale experiment, for all three datasets and \(N\) up to 100. Increasing \(N\) reduces the amount of local data per node under the fixed partitioning scheme, so local updates become weaker and convergence slows. Fashion-MNIST and CIFAR-10 remain harder than MNIST across all network sizes.

\begin{figure}[H]
    \centering
    \includegraphics[width=\linewidth]{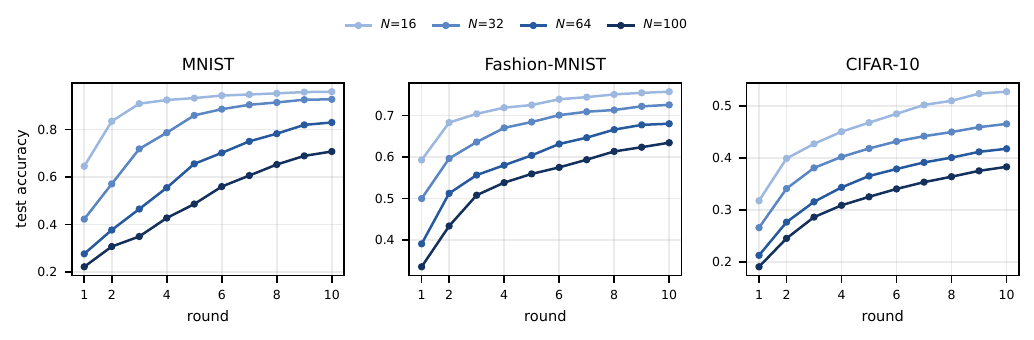}
    \caption{Per-round test accuracy across network sizes \(N\in\{16,32,64,100\}\), for MNIST, Fashion-MNIST, and CIFAR-10 with LeNet-5.}
    \label{fig:supp-scale-convergence}
\end{figure}

Figure~\ref{fig:supp-convergence-systems} compares the per-round trajectories of the three systems at \(N=100\). UnlinkableDFL with the mixnet on and off overlap throughout, so anonymous routing changes the convergence path only marginally. Against the non-anonymous Fedstellar baseline, Fedstellar converges faster on MNIST and Fashion-MNIST, while UnlinkableDFL stays ahead on CIFAR-10 at every round.

\begin{figure}[H]
    \centering
    \includegraphics[width=\linewidth]{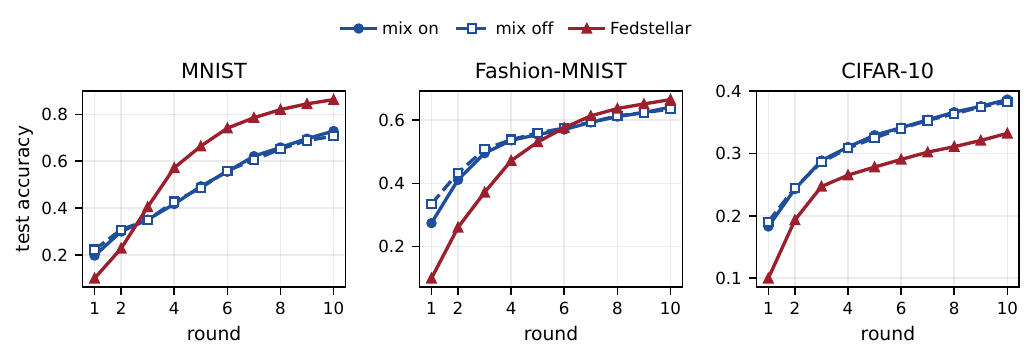}
    \caption{Per-round test accuracy at \(N=100\) for UnlinkableDFL with the mixnet on and off, and for the non-anonymous Fedstellar baseline.}
    \label{fig:supp-convergence-systems}
\end{figure}

Figure~\ref{fig:supp-fragment-volume} explains why payload handling matters before packets enter the mixnet. The fragment-count panel reports the number of Sphinx bodies needed to transmit one model update. The byte-volume panel converts those fragments into per-round network bytes as \(N\) grows. The gap between LeNet-5 and larger architectures shows why quantization, compression, and partial exchange are needed for larger models.

\begin{figure}[H]
    \centering
    \begin{subfigure}[t]{0.48\linewidth}
        \centering
        \includegraphics[width=\linewidth]{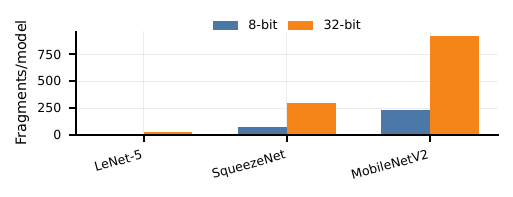}
        \caption{Fragments per model.}
    \end{subfigure}\hfill
    \begin{subfigure}[t]{0.48\linewidth}
        \centering
        \includegraphics[width=\linewidth]{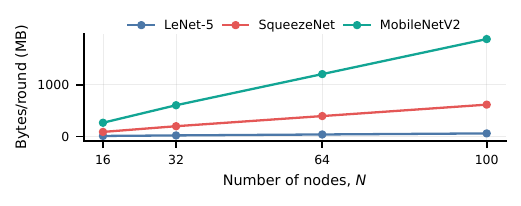}
        \caption{Bytes per round.}
    \end{subfigure}
    \caption{Payload volume by model class and payload encoding.}
    \label{fig:supp-fragment-volume}
\end{figure}

Figure~\ref{fig:supp-partial-rounds} reports the full partial-update trajectories. Sending fewer fragments per round reduces the meaningful exchange workload, but also reduces how much peer information reaches each aggregation step. The effect is modest on MNIST and stronger on Fashion-MNIST, consistent with the harder task and lower endpoint accuracy in the main paper.

\begin{figure}[H]
    \centering
    \begin{subfigure}[t]{0.48\linewidth}
        \centering
        \includegraphics[width=\linewidth]{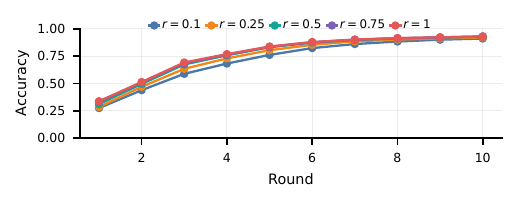}
        \caption{MNIST.}
    \end{subfigure}\hfill
    \begin{subfigure}[t]{0.48\linewidth}
        \centering
        \includegraphics[width=\linewidth]{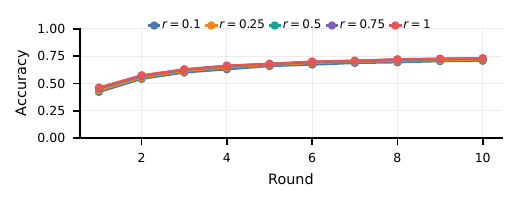}
        \caption{Fashion-MNIST.}
    \end{subfigure}
    \caption{Per-round convergence under partial-update ratios \(r\in\{0.10,0.25,0.50,0.75,1.00\}\).}
    \label{fig:supp-partial-rounds}
\end{figure}

Figure~\ref{fig:supp-heterogeneity} isolates data heterogeneity. Smaller \(\alpha\) produces more skewed local partitions and makes each node's update more distinctive. The heatmap shows the interaction between heterogeneity and partial exchange, while the convergence plot shows the full-update case. These results help interpret both utility loss and the content-based linkability boundary.

\begin{figure}[H]
    \centering
    \begin{subfigure}[t]{0.48\linewidth}
        \centering
        \includegraphics[width=\linewidth]{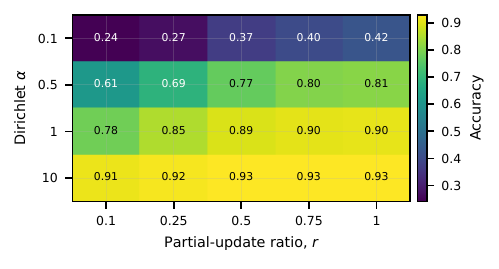}
        \caption{Round-10 accuracy.}
    \end{subfigure}\hfill
    \begin{subfigure}[t]{0.48\linewidth}
        \centering
        \includegraphics[width=\linewidth]{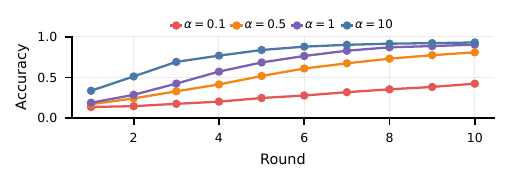}
        \caption{Convergence at \(r=1\).}
    \end{subfigure}
    \caption{Effect of Dirichlet heterogeneity on learning utility.}
    \label{fig:supp-heterogeneity}
\end{figure}

\FloatBarrier
\subsection{Network-Layer Attack, Entropy, and Cost Diagnostics}
\label{supp:exp-attack-entropy}

This group supports the sender-linking attack, entropy discussion, and cost details in the main paper. The entropy figures are diagnostics rather than direct posterior guarantees. They explain how \(K\), \(O\), topology, and degree change the candidate routes and relay shuffles that appear in the theoretical bound, while the cost figures show the overhead of enabling the mixer and increasing path length.

The attack uses the default \(N=32\), degree-4 circulant topology, \(O=100\), and \(\mu=0.1\)~s unless a sweep varies one parameter. The passive adversary observes packet timing, visible transport links, and compromised-relay views, then predicts the sender of a target model-fragment packet. The reported metric is top-1 sender-linking accuracy against the random baseline \(1/|\mathcal{U}_T|\). Ablations remove cover packets, randomized delay, multi-hop forwarding, or the mixnet itself so that attack success can be interpreted against the diagnostics below.

\begin{algorithm}[H]
\small
\caption{Passive network-layer sender-linking adversary}
\label{alg:e2}
\DontPrintSemicolon
\SetKwInOut{Input}{Input}\SetKwInOut{Output}{Output}
\Input{trace \(T\), public \(K,O,\mu\), topology \(\mathcal{G}\), and target delivered to \(d\) at time \(t_{\mathrm{arr}}\) with visible last hop \(p\).}
\Output{sender estimate \(\hat u\) and linking probability \(p_{\mathrm{link}}\).}
\BlankLine
\(\mathcal{U}_T \leftarrow\) nodes within \(K\) hops upstream of \(d\) in \(\mathcal{G}\)\;
\ForEach{candidate \(u \in \mathcal{U}_T\)}{
  \(\ell^{\mathrm{C1}}_u \leftarrow\) predecessor score from last hop and compromised-relay segments\;
  \(\ell^{\mathrm{C2}}_u \leftarrow\) timing score near \(t_{\mathrm{arr}}-h\mu\), \(h=1,\dots,K\)\;
  \(\ell^{\mathrm{C3}}_u \leftarrow\) volume score before \(t_{\mathrm{arr}}\) over uniform expectation\;
  \(\pi(u) \leftarrow \ell^{\mathrm{C1}}_u\,\ell^{\mathrm{C2}}_u\,\ell^{\mathrm{C3}}_u\)\;
}
\(\pi \leftarrow \pi / \sum_{u}\pi(u)\) \tcp*{normalize over \(\mathcal{U}_T\)}
\(\hat u \leftarrow \arg\max_{u}\pi(u)\), \quad \(p_{\mathrm{link}} \leftarrow \max_{u}\pi(u)\)\;
\Return \(\hat u,\ p_{\mathrm{link}}\)
\end{algorithm}

\paragraph{Recorded trace and instrumentation}
The adversary is evaluated on an edge-level trace that is separate from the node-aggregate metrics used elsewhere. Two kinds of records are logged. Adversary-observable records capture what a passive network-layer attacker sees: for every received packet, the overlay edge \(v\!\to\!w\), the arrival time, the packet size, and a hash of the wire bytes. Because Sphinx re-randomizes the bytes at every hop, this hash does not let the attacker follow a packet across hops. A swept fraction \(c\) of compromised relays additionally log the path segment \(v\!\to\!r\!\to\!w\) they can link by peeling one onion layer. Oracle records capture ground truth that rode sealed inside the onion, namely each target packet's true origin and sampled path. They are used only to score the guess and are never exposed as adversary features.

\paragraph{Per-channel features}
For a target \(m\) delivered at \(d\) at time \(t_{\mathrm{arr}}\) with visible last hop \(p\), the candidate set \(\mathcal{U}_T\) is the set of nodes within \(K\) overlay hops upstream of \(d\). Each candidate \(u\) is scored on three channels. The predecessor channel is boosted when \(u=p\) and a single hop cannot hide the origin, or when a compromised relay reports \(u\) as a predecessor inside \([t_{\mathrm{arr}}-K\mu,\,t_{\mathrm{arr}}]\). The timing channel compares \(u\)'s outbound packet count near \(t_{\mathrm{arr}}-h\mu\), for \(h=1,\dots,K\), against its baseline emission rate. The volume channel compares \(u\)'s outbound count in \([t_{\mathrm{arr}}-K\mu,\,t_{\mathrm{arr}}]\) with its uniform expectation. The three log-likelihoods are summed under a conditional-independence approximation, the posterior is normalized over \(\mathcal{U}_T\), and the maximum-a-posteriori candidate is the guess.

Algorithm~\ref{alg:e2} operationalizes this adversary for the experiments. It is not part of the protocol. The first step restricts inference to senders that are topologically consistent with the observed delivery. The loop then assigns every candidate one positive score per channel: C1 captures direct predecessor evidence, C2 captures whether the candidate emitted traffic at a compatible time, and C3 captures whether the candidate's recent traffic volume is unusually high. The product is a posterior proxy over the feasible candidates. If no channel provides evidence, the scores remain close across candidates and the guess approaches the random baseline.

The implementation evaluates this procedure independently for each target packet and then reports top-1 accuracy over all targets in a run. Ground-truth sender labels are used only at this scoring stage. The normalized value \(p_{\mathrm{link}}\) is retained as a diagnostic for how concentrated the adversary's posterior is, while the main paper reports whether the selected sender \(\hat u\) matches the sealed oracle origin.

Figure~\ref{fig:supp-entropies} separates the relay and path views. Relay entropy reflects uncertainty inside one relay outbox, while path entropy counts feasible shuffled paths under the uniform route-counting setup. Increasing \(K\) has the strongest effect on path entropy, while increasing \(O\) mainly increases per-relay shuffle uncertainty.

\begin{figure}[H]
    \centering
    \begin{subfigure}[t]{0.48\linewidth}
        \centering
        \includegraphics[width=\linewidth]{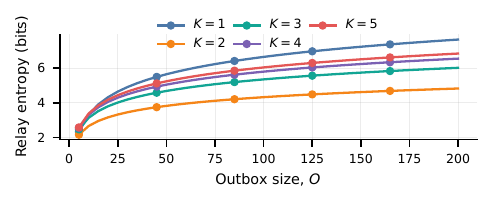}
        \caption{Relay entropy.}
    \end{subfigure}\hfill
    \begin{subfigure}[t]{0.48\linewidth}
        \centering
        \includegraphics[width=\linewidth]{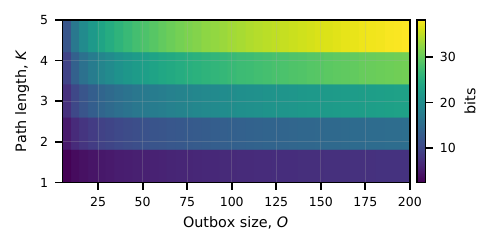}
        \caption{Path entropy.}
    \end{subfigure}
    \caption{Entropy diagnostics used by the privacy-cost analysis.}
    \label{fig:supp-entropies}
\end{figure}

Figure~\ref{fig:supp-topology} shows the topology side of the same trade-off. A larger reachable set gives the adversary more candidate peers to distinguish, but higher degree also increases neighbor state and connection-management work. The two-hop reach plot explains why degree and graph family matter even when \(K\) and \(O\) are fixed.

\begin{figure}[H]
    \centering
    \begin{subfigure}[t]{0.45\linewidth}
        \centering
        \includegraphics[width=\linewidth]{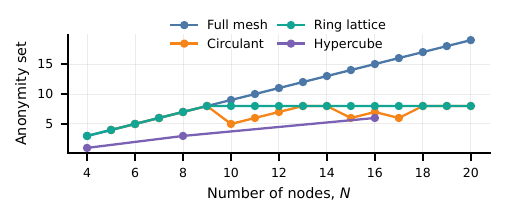}
        \caption{Topology family.}
    \end{subfigure}
    \vspace{0.3em}
    \begin{subfigure}[t]{0.9\linewidth}
        \centering
        \includegraphics[width=\linewidth]{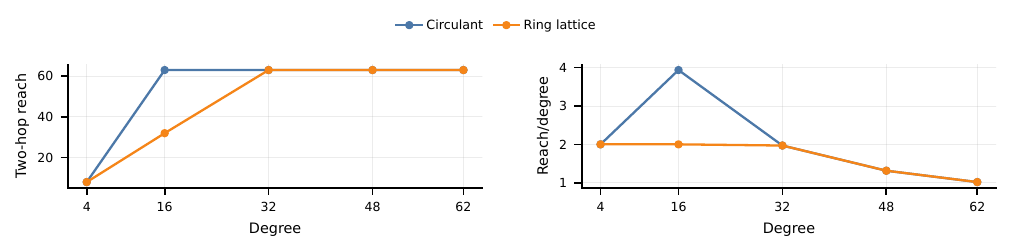}
        \caption{Graph degree.}
    \end{subfigure}
    \caption{Topology-dependent candidate-set diagnostics at \(K=2\).}
    \label{fig:supp-topology}
\end{figure}

\paragraph{Cost diagnostics.}
The following figures support the unlinkability--cost trade-off in the main paper. They focus on the cost of enabling the mixer, the cover-packet behavior induced by the release schedule, and the runtime impact of larger path lengths.

Figure~\ref{fig:supp-mixing-cost} compares mixed and non-mixed execution across graph degree. The mix-on condition enables onion forwarding, buffering, cover packets, and randomized release. The mix-off condition keeps the same learning workload but removes anonymous transport, so the comparison isolates the fixed price of enabling the mixer.

\begin{figure}[H]
    \centering
    \begin{subfigure}[t]{0.48\linewidth}
        \centering
        \includegraphics[width=\linewidth]{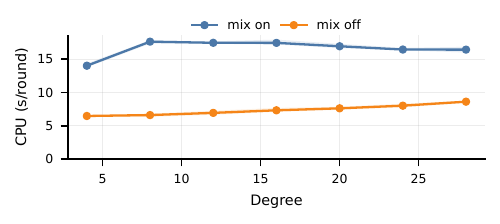}
        \caption{CPU time.}
    \end{subfigure}\hfill
    \begin{subfigure}[t]{0.48\linewidth}
        \centering
        \includegraphics[width=\linewidth]{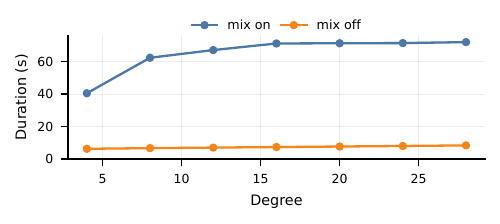}
        \caption{Round duration.}
    \end{subfigure}
    \caption{Mix-on and mix-off cost across graph degree at \(N=32\).}
    \label{fig:supp-mixing-cost}
\end{figure}

Figure~\ref{fig:supp-cover-mu} reports how cover-packet padding reacts to the delay mean \(\mu\). A smaller \(\mu\) creates more release opportunities during an active round, while the number of meaningful fragments is fixed by the model payload and exchange ratio. Empty release opportunities are filled with cover packets, so the cover fraction and per-node cover count rise as \(\mu\) decreases.

\begin{figure}[H]
    \centering
    \begin{subfigure}[t]{0.48\linewidth}
        \centering
        \includegraphics[width=\linewidth]{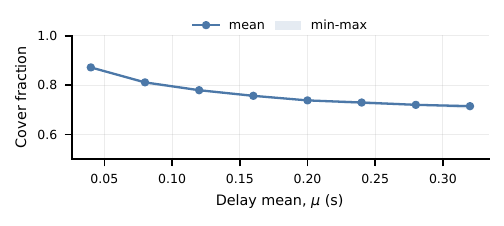}
        \caption{Cover fraction.}
    \end{subfigure}\hfill
    \begin{subfigure}[t]{0.48\linewidth}
        \centering
        \includegraphics[width=\linewidth]{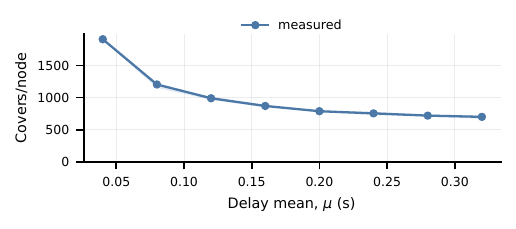}
        \caption{Cover packets per node.}
    \end{subfigure}
    \caption{Cover-packet padding versus delay mean \(\mu\) at \(N=32\).}
    \label{fig:supp-cover-mu}
\end{figure}

Figure~\ref{fig:supp-bandwidth} reports the absolute communication cost behind these fractions. Cover and relay packets are forwarded over multi-hop paths like real fragments, so total network bytes rise with both network size and path length. The per-node message load grows with \(N\) and with the partial-update ratio \(r\). Total bandwidth is therefore a main scalability cost of the mixer.

\begin{figure}[H]
    \centering
    \begin{subfigure}[t]{0.48\linewidth}
        \centering
        \includegraphics[width=\linewidth]{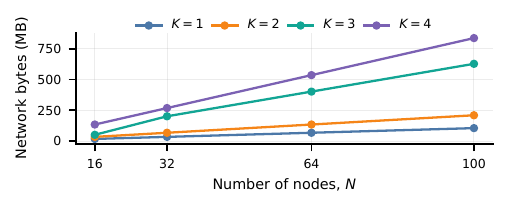}
        \caption{Total network bytes.}
    \end{subfigure}\hfill
    \begin{subfigure}[t]{0.48\linewidth}
        \centering
        \includegraphics[width=\linewidth]{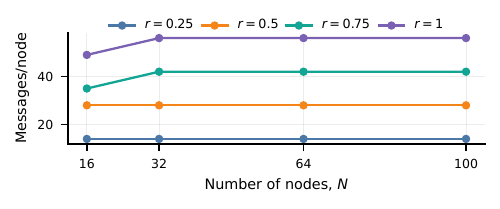}
        \caption{Per-node message load.}
    \end{subfigure}
    \caption{Communication cost including cover packets.}
    \label{fig:supp-bandwidth}
\end{figure}

Figure~\ref{fig:supp-k-cost} shows why large \(K\) values are expensive. Longer paths increase RTT, CPU time, and outstanding fragments because packets spend more time in the overlay. This supports the main paper's operating-region statement that moderate \(K\) is the practical choice for rounds that can tolerate tens of seconds.

\begin{figure}[H]
    \centering
    \begin{subfigure}[t]{0.48\linewidth}
        \centering
        \includegraphics[width=\linewidth]{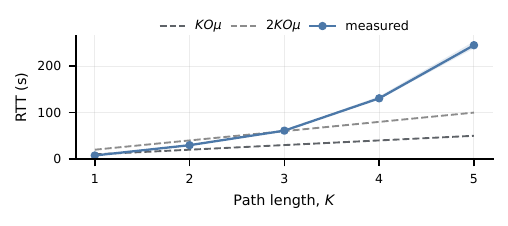}
        \caption{RTT.}
    \end{subfigure}\hfill
    \begin{subfigure}[t]{0.48\linewidth}
        \centering
        \includegraphics[width=\linewidth]{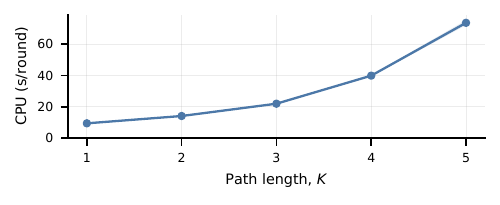}
        \caption{CPU time.}
    \end{subfigure}

    \vspace{0.5em}
    \begin{subfigure}[t]{0.48\linewidth}
        \centering
        \includegraphics[width=\linewidth]{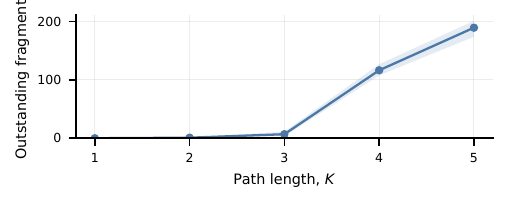}
        \caption{Outstanding fragments.}
    \end{subfigure}\hfill
    \begin{subfigure}[t]{0.48\linewidth}
        \centering
        \includegraphics[width=\linewidth]{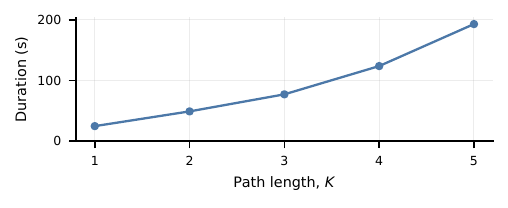}
        \caption{Round duration.}
    \end{subfigure}
    \caption{Cost response as the maximum path length \(K\) grows.}
    \label{fig:supp-k-cost}
\end{figure}

\FloatBarrier
\subsection{Boundary and Stress Tests}
\label{supp:exp-boundary-stress}

This group collects experiments that clarify the framework's boundary and stress behavior. The curious-recipient experiment is outside the network-layer threat model, but it documents the payload-layer fingerprints that remain after fragments are decrypted. Churn and Byzantine experiments test whether the prototype still exchanges useful fragments under changing or corrupted peers.

Figure~\ref{fig:supp-content-attack} expands the content-based boundary test from the main paper. The recipient receives a first-round reference update, then compares decrypted later fragments by cosine distance. Smaller \(\alpha\) and larger \(r\) produce stronger content fingerprints. Identity linking is harder than bucket tracking because it must assign a same-sender bucket to a concrete participant.

\begin{figure}[H]
    \centering
    \begin{subfigure}[t]{0.48\linewidth}
        \centering
        \includegraphics[width=\linewidth]{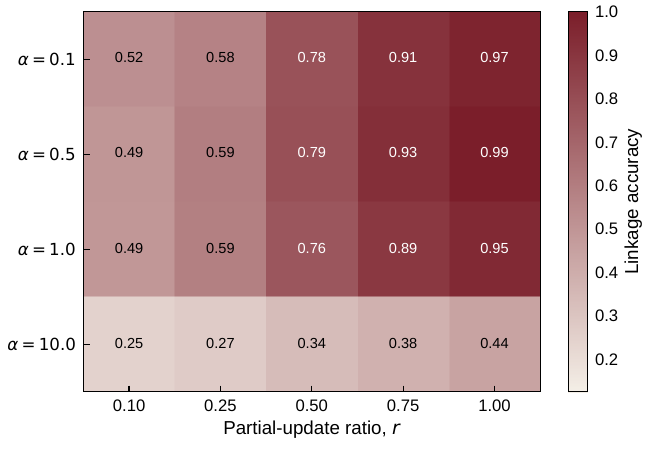}
        \caption{Identity linking.}
    \end{subfigure}\hfill
    \begin{subfigure}[t]{0.48\linewidth}
        \centering
        \includegraphics[width=\linewidth]{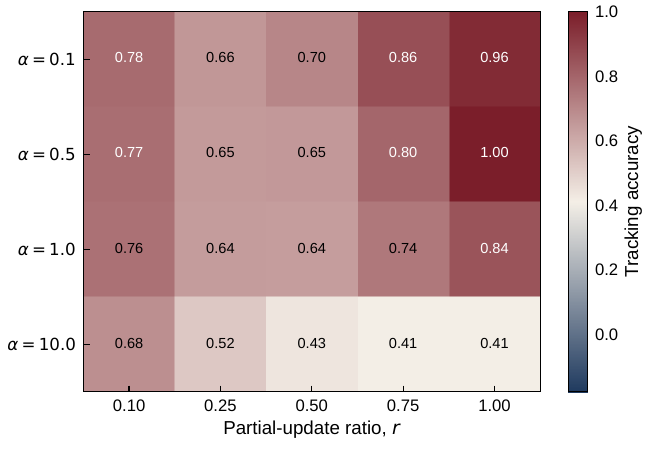}
        \caption{Bucket tracking.}
    \end{subfigure}
    \caption{Curious-recipient content attack over data heterogeneity \(\alpha\) and partial-update ratio \(r\).}
    \label{fig:supp-content-attack}
\end{figure}

Figure~\ref{fig:supp-churn} combines the two churn scenarios. In the peer-exit case, one peer leaves after round 5. Remaining peers are grouped by exposure to the departed peer: direct transport neighbors, overlay peers that lose a two-hop route, and unaffected peers outside the departed peer's two-hop exchange region. In the late-join case, one peer is absent from rounds 1--2 and starts participating in round 3. The curves show that churn affects fragment coverage locally and temporarily rather than requiring a global restart. The topology class definitions are reported in the text instead of as a separate network diagram to keep the figure readable.

\begin{figure}[H]
    \centering
    \begin{subfigure}[t]{0.45\linewidth}
        \centering
        \includegraphics[width=\linewidth]{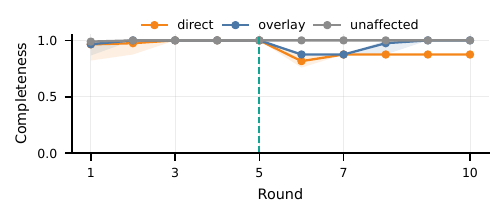}
        \caption{Peer-exit completeness.}
    \end{subfigure}

    \vspace{0.5em}
    \begin{subfigure}[t]{0.45\linewidth}
        \centering
        \includegraphics[width=\linewidth]{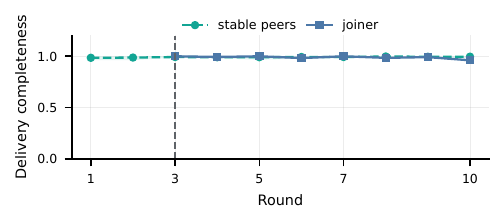}
        \caption{Joiner fragment intake.}
    \end{subfigure}\hfill
    \begin{subfigure}[t]{0.45\linewidth}
        \centering
        \includegraphics[width=\linewidth]{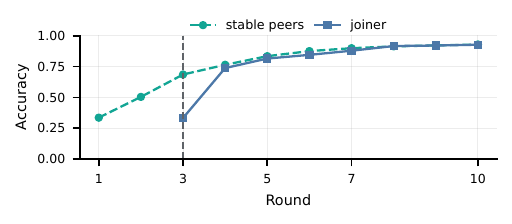}
        \caption{Joiner accuracy.}
    \end{subfigure}
    \caption{Peer exit and late join at \(N=32\), degree four, and \(K=2\).}
    \label{fig:supp-churn}
\end{figure}

Figure~\ref{fig:supp-krum-attacks} extends the two Byzantine attacks shown in the main paper with a sign-flip stressor. Malicious peers produce corrupted model values, which are fragmented and delivered without sender labels. \texttt{FragFedAvg} is sensitive to high-magnitude corrupted values because the mean can be dominated by outliers. \texttt{FragKrum} applies a range-wise nearest-neighbor score and remains effective when the per-range condition has enough candidates.

\begin{figure}[H]
    \centering
    \includegraphics[width=0.7\linewidth]{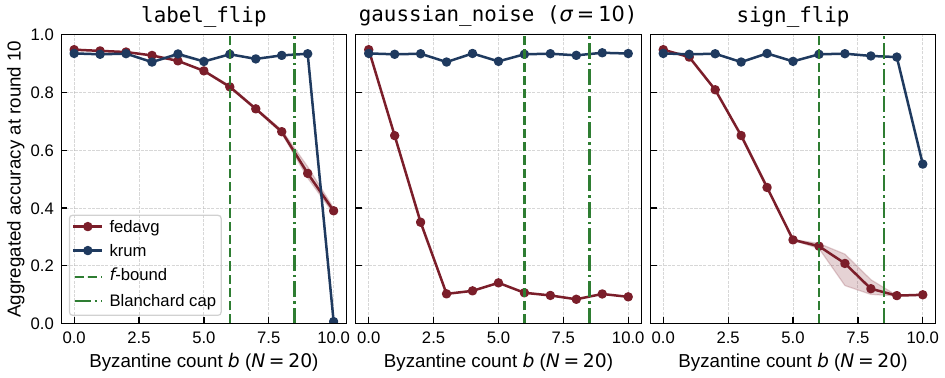}
    \caption{\texttt{FragFedAvg} and \texttt{FragKrum} under label-flip, Gaussian-noise, and sign-flip attacks at \(N=20\).}
    \label{fig:supp-krum-attacks}
\end{figure}
\FloatBarrier

\end{document}